\documentclass[letterpaper,abstract=true,DIV=14,11pt]{scrartcl}
\usepackage[margin=1in]{geometry} %
\usepackage{algorithm}
\usepackage[noEnd,commentColor=black]{algpseudocodex}
\usepackage{amsmath}
\usepackage{amssymb}
\usepackage{amsthm}
\usepackage[mathscr]{euscript}
\usepackage[shortlabels,inline]{enumitem}
\usepackage{graphicx} 
\usepackage{subcaption}
\usepackage{thmtools}
\usepackage{todonotes}
\usepackage{xcolor}
\usepackage{microtype}
\usepackage{forest} 

\usepackage{lineno}

\usepackage[normalem]{ulem}

\usepackage{tcolorbox}

\usepackage{upgreek}
\usepackage[colorlinks]{hyperref}
\usepackage[capitalize]{cleveref}

\crefname{equation}{eq.}{eqs.} 
\crefname{enumi}{}{} 
\crefname{icase}{case}{cases}
\crefname{ipart}{part}{parts}
\crefname{iprop}{property}{properties}
\crefname{iinv}{invariant}{invariants}

\DeclareMathOperator{\Av}{Av}
\DeclareMathOperator{\ex}{ex}

\usepackage[blocks]{authblk}
\usepackage[normalem]{ulem}

\usepackage{tikz}
\usetikzlibrary{calc}

\tikzset{
	point/.style={circle, fill, inner sep=1.5pt},
	smallpoint/.style={point, inner sep=1.2pt},
	tinypoint/.style={point, inner sep=1pt},
	hlbox/.style={fill, {white!90!black}},
	subrect/.style={draw, fill={white!80!cyan}}, 
	msrect/.style=subrect 
}

\newtheorem{theorem}{Theorem}[section]
\newtheorem{lemma}[theorem]{Lemma}

\newtheorem{claim}[theorem]{Claim}

\newtheorem{corollary}[theorem]{Corollary}
\newtheorem{observation}[theorem]{Observation}
\theoremstyle{definition}

\title{An Optimal Algorithm for Sorting Pattern-Avoiding Sequences}

\author{Michal Opler\thanks{This work was co-funded by the European Union under the project Robotics and advanced industrial production (reg. no. CZ.02.01.01/00/22\_008/0004590).}}
\affil{Czech Technical University in Prague, Czech Republic}

\date{}

\begin{document}

\maketitle

\begin{abstract}
We present a deterministic comparison-based algorithm that sorts sequences avoiding a fixed permutation $\pi$ in linear time, even if $\pi$ is a priori unkown.
Moreover, the dependence of the multiplicative constant on the pattern $\pi$ matches the information-theoretic lower bound.
A crucial ingredient is an algorithm for performing efficient multi-way merge based on the Marcus-Tardos theorem.
As a direct corollary, we obtain a linear-time algorithm for sorting permutations of bounded twin-width.
\end{abstract}

\section{Introduction}\label{sec:introduction}
The information-theoretic lower bound on sorting states that any comparison-based sorting algorithm must take $\Omega(n \log n)$ time in the worst case for sorting inputs of length $n$, see~\cite{Cormen}.
Naturally, it invites the question about complexity of sorting restricted sets of inputs, and this problem has indeed been studied for decades.
In this paper, we resolve the complexity of sorting input sequences restricted by avoiding a fixed permutation pattern.

We distinguish between arbitrary sequences and permutations that contain no repeated values.
A sequence $S = s_1, \dots, s_n$ \emph{contains} a permutation $\pi = \pi_1, \dots, \pi_k$ if there are indices $i_1 < \dots < i_k$ such that $s_{i_j} < s_{i_\ell}$ if and only if $\pi_j < \pi_\ell$ for all $j,\ell \in [k]$.
We say that the subsequence $s_{i_1}, \dots, s_{i_k}$ of $S$ is \emph{order-isomorphic} to~$\pi$.
Otherwise, $S$ avoids~$\pi$.

\paragraph{Lower bounds.}
A standard argument shows that any comparison-based algorithm for sorting a set of permutations $\mit \Gamma$, each of length $n$, must make at least $\Omega(\log |{\mit \Gamma}| + n)$ comparisons in the worst case.
Fredman~\cite{Fredman} showed that $O(\log|{\mit \Gamma}| + n)$ comparisons suffice if we have complete knowledge of $\mit\Gamma$ and unrestricted computational power.
Specifically, Fredman's argument proves the existence of a decision tree of depth $O(\log|{\mit \Gamma}| + n)$ that correctly sorts~$\mit\Gamma$.
In order to apply these bounds on pattern-avoiding sequences, it invites the natural question of how many pattern-avoiding permutations of given length are there.
In fact, it has been a long-standing open problem in the world of pattern-avoiding permutations whether the number of $\pi$-avoiding permutations of length $n$ grows at most exponentially in~$n$.
This question, known as the \emph{Stanley-Wilf conjecture}, was eventually answered positively by Marcus and Tardos~\cite{MarcusTardos} building on the works of Klazar~\cite{Klazar} and Füredi and Hajnal~\cite{FurediHajnal}.
In fact, the limit $s_\pi = \lim_{n \to \infty} \sqrt[n]{|\Av_n(\pi)|}$ exists~\cite{Arratia99}, where $\Av_n(\pi)$ denotes the set of all $\pi$-avoiding permutations of length~$n$; $s_\pi$ is known as the \emph{Stanley-Wilf limit} of~$\pi$.
It follows that $\pi$-avoiding permutations of length $n$ can be sorted by a decision tree of height $O((\log s_\pi + 1) \cdot n)$ and simultaneously, any comparison-based algorithm sorting them must make at least $\Omega((\log s_\pi + 1) \cdot n)$ comparisons in the worst case.

The Stanley-Wilf limit is closely related to a different constant from the extremal theory of matrices.
Let $\ex_\pi(n)$ be the maximum number of 1-entries in a $\pi$-avoiding $n \times n$ binary matrix.
The Marcus-Tardos theorem~\cite{MarcusTardos}, previously known as the \emph{Füredi-Hajnal conjecture}~\cite{FurediHajnal}, states that $\ex_\pi(n) = O_\pi(n)$.
Cibulka~\cite{Cibulka2009} observed that the limit $c_\pi = \lim_{n \to \infty} \frac{1}{n} \ex_\pi(n)$ exists; $c_\pi$ is known as the \emph{Füredi-Hajnal limit} of $\pi$.
Moreover, Cibulka showed that the limits $c_\pi$ and $s_\pi$ are polynomially related and thus, we can use $\log s_\pi$ and $\log c_\pi$ interchangeably within asymptotic notation.
In particular, the number of comparisons made by any comparison-based algorithm sorting $\pi$-avoiding sequences is at least $\Omega((\log c_\pi + 1) \cdot n)$.

Unfortunately, our knowledge about the Füredi-Hajnal (and Stanley-Wilf) limits for different patterns is rather sparse.
Apart from few specific examples, Fox~\cite{jfox} showed that $c_\pi \in 2^{O(k)}$ for all patterns $\pi$ and $c_\pi \in 2^{\Omega(k^{1/4})}$ for some patterns $\pi$ where $k$ denotes the length of $\pi$.

\paragraph{Pattern-specific algorithms.}
One approach for sorting $\pi$-avoiding inputs is to exploit their structure given by~$\pi$ explicitly to design an efficient sorting algorithm.
Knuth~\cite{Knuth3} was perhaps the first to do so when he observed that a permutation can be sorted by a single pass through a stack if and only if it avoids the pattern $2,3,1$.
This characterization naturally provides a linear algorithm for sorting $(2,3,1)$-avoiding permutations.
Up to symmetry, there is only one other pattern of length three, the permutation $1,2,3$; and Arthur~\cite{Arthur} gave $O(n)$-time algorithm for sorting $(1,2,3)$-avoiding permutations.
In fact, Arthur~\cite{Arthur} devised a more general framework which implies algorithms for sorting $\pi$-avoiding sequences
\begin{enumerate*}[label=(\roman*)]
\item in $O(n)$ time when $\pi \in \{(1,2,3,4), (1,2,4,3), (2,1,4,3)\}$, and 
\item in $O(n \log \log \log n)$ time when $\pi \in \{(1,3,2,4), (1,3,4,2), (1,4,2,3), (1,4,3,2)\}$.
\end{enumerate*}
Up to symmetry, the pattern $2,4,1,3$ remained the only pattern of length four for which Arthur~\cite{Arthur} was unable to improve the standard $O(n \log n)$-time sorting algorithms.

\paragraph{Pattern-agnostic algorithms.}
However, a completely different approach proved lately fruitful.
Instead of designing a specific algorithm depending on the avoided pattern, a general-purpose sorting algorithm is used, and the pattern-avoidance appears merely in the analysis of its runtime.
This approach offers two main advantages over pattern-specific algorithms -- it covers all choices of the avoided pattern $\pi$ simultaneously and moreover, the runtime on a specific input is bounded by the patterns it avoids even though the algorithm itself has no knowledge of them.

Chalermsook, Goswami, Kozma, Mehlhorn, and Saranurak~\cite{FOCS15} were the first to consider this approach with their focus being on binary search trees (BSTs).
They analyzed the performance of an online BST balancing strategy \textsc{GreedyFuture}~\cite{DHIKP} and showed that with its use a $\pi$-avoiding sequence can be sorted in $2^{\alpha(n)^{O(k)}} \cdot n$ time where $k$ is the length of $\pi$ and $\alpha(\cdot)$ is the inverse-Ackermann function.
Later, Kozma and Saranurak~\cite{smooth1} introduced a new heap data structure \textsc{SmoothHeap} and showed that a heap sort with \textsc{SmoothHeap} reaches the same asymptotic bound on runtime while being much easier to implement than \textsc{GreedyFuture}.
Recently, the bound on sorting with \textsc{GreedyFuture} was improved to $2^{O(k^2) + (1 + o(1))\alpha(n)} \cdot n$ by Chalermsook, Pettie and Yingchareonthawornchai~\cite{ChalermsookPettieY}.
Tighter bounds on sorting with \textsc{GreedyFuture} are known only for specific sets of inputs such as $O(nk^2)$ for $(k, \dots, 1)$-avoiding permutations~\cite{BSTdecomp} or $O(n \log k)$ for the so-called $k$-decomposable permutations~\cite{GoyalGupta}.

\paragraph{BSTs and dynamic optimality.}

Arguably, the most famous open question related to sorting is the \emph{dynamic optimality conjecture}~\cite{ST85} asking whether the splay tree is an instance-optimal online BST balancing strategy.
More recently, the conjecture is usually rephrased as asking whether there exists \textit{any} instance-optimal online BST strategy with \textsc{GreedyFuture} being the other prime candidate alongside splay trees.
Observe that any instance-optimal balancing strategy must also work within the bounds proved for \textsc{GreedyFuture}~\cite{FOCS15,BSTdecomp,ChalermsookPettieY} and thus, any such strategy can be used to sort pattern-avoiding sequences in near-linear time.
However, none of these results implies that an instance-optimal BST strategy must sort pattern-avoiding sequences in linear time, and it has been raised as an open question~\cite{FOCS15} whether this is, in fact, a corollary of dynamic optimality.
Berendsohn, Kozma and Opler~\cite{BerensohnKO2023} resolved this question positively by showing that any instance-optimal BST balancing strategy sorts $\pi$-avoiding sequences in time $O(c_\pi^2 \, n)$.
They construct an offline sequence of BST modifications that accommodates a given $\pi$-avoiding input with linear cost.
However, an $\Omega(n \log n)$ time is still needed for the computation of this sequence and therefore, this result in itself does not imply the existence of a linear-time algorithm for sorting pattern-avoiding sequences.

\paragraph{Twin-width.}
Lastly, there is a close connection between pattern avoidance and the recently emerging notion of twin-width.
Without going into too much detail, twin-width is a structural parameter introduced for permutations by Guillemot and Marx~\cite{GM_PPM}, and later extended to graphs and arbitrary relational structures by Bonnet, Kim, Thomassé and Watrigant~\cite{BonnetKTW22}.
Guillemot and Marx~\cite{GM_PPM} showed that the twin-width of any $\pi$-avoiding permutation is at most $O(c_\pi)$ and conversely, any permutation of twin-width $d$ avoids a certain pattern $\pi$ with $c_\pi = 2^{O(d)}$.
As a result, bounded twin-width can been exploited to prove properties of pattern-avoiding permutations and sequences~\cite{BonnetBourneufEtAl2024,BerensohnKO2023}.

\subsection{Our result}

Our main result is an asymptotically optimal sorting algorithm for pattern-avoiding sequences that matches the information-theoretic lower bound for each $\pi$ up to a global constant.

\begin{theorem}\label{thm:main-result}
There is an algorithm that sorts a $\pi$-avoiding sequence $S$ of length $n$ in $O((\log c_\pi + 1) \cdot  n)$ time even if $\pi$ is a priori unknown.
\end{theorem}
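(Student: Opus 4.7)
My plan is a two-phase merge-sort variant whose key ingredient is a multi-way merge procedure tailored to pattern-avoiding interleavings. In the first phase, I partition $S$ into $n/c_\pi$ contiguous blocks of length $c_\pi$ and sort each block with any standard comparison-based algorithm, costing $O(c_\pi \log c_\pi)$ per block and $O(n \log c_\pi)$ in total. In the second phase, I multi-way merge the $k = n/c_\pi$ sorted blocks into the final sorted output using a custom merge subroutine that I claim runs in time $O(n \log c_\pi + k)$ when the interleaving is $\pi$-avoiding. For $k = n/c_\pi$, this is $O(n \log c_\pi + n/c_\pi) = O(n (\log c_\pi + 1))$, matching the bound of \Cref{thm:main-result}.

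The technical core is the multi-way merge lemma, which I would attack as follows. Maintain a priority queue of ``active'' source heads; the essential claim, drawn from Marcus-Tardos, is that only $O(c_\pi)$ sources ever need to coexist in the queue. To justify this, consider the $k \times (n/c_\pi)$ binary matrix $A$ in which $A_{i,j} = 1$ exactly when the $j$-th chunk of $c_\pi$ positions of the sorted output contains some element from source sequence $i$. The matrix $A$ inherits a pattern-avoidance property (after a suitable translation) from the $\pi$-avoidance of $S$, so by Marcus-Tardos it contains only $O(c_\pi k)$ ones. This gives an average of $O(c_\pi)$ active sources per output chunk, which I would promote to a worst-case bound on queue size via a lazy-insertion scheme that admits a new source only when its head strictly beats the current minimum of the queue. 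Each extraction then costs $O(\log c_\pi)$, for a total of $O(n \log c_\pi + k)$, with the $O(k)$ term covering initialization.

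For the pattern-agnostic requirement, the merge subroutine itself never refers to $\pi$, and its analysis merely invokes the actual (unknown) pattern that $S$ happens to avoid. The block-size parameter $c_\pi$ of the first phase does have to be guessed; I would run the algorithm at successive doubling guesses $c = 2, 4, 8, \ldots$ and accept the first $c$ for which the algorithm completes within its budget $O(n (\log c + 1))$. Geometric doubling incurs only a constant-factor overhead above the true $c_\pi$.

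The main obstacle I anticipate is converting the aggregate sparsity guarantee of Marcus-Tardos into a strict per-step bound on the active-set size. Marcus-Tardos delivers only a total count of non-empty entries, whereas the priority-queue design needs that at every moment of the merge only few sources are currently competing. Bridging this gap likely demands a strengthened version of Marcus-Tardos that controls the instantaneous width of the active frontier as the merge advances, together with a data structure that defers inserting a source until it is strictly necessary, so that the amortized sparsity translates cleanly into worst-case queue size.
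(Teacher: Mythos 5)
Your proposal has a genuine gap at its technical core: the claim that only $O(c_\pi)$ sources ever coexist in the merge's priority queue is false, and the failure is not a detail to be patched but a structural obstruction. Consider the zigzag permutation $1, n, 2, n-1, 3, n-2, \ldots$, which avoids $2,4,1,3$ and hence has a constant Füredi--Hajnal limit. Split into blocks of length $2$, the $i$-th block (already sorted) is $(i, n-i+1)$. The merge outputs $1,2,\ldots,n/2$ taking the first element from each source in turn, after which \emph{every} source still holds one unconsumed element, namely a value in $(n/2, n]$; at this moment $\Theta(n)$ sources are simultaneously active, and any lazy-insertion scheme has already had to discover and compare all of them. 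The priority queue grows to $\Theta(n)$, each operation costs $\Omega(\log n)$, and the merge alone spends $\Omega(n\log n)$ time. Marcus--Tardos bounds the total number of (source, output-chunk) incidences --- an aggregate quantity --- but it cannot control the instantaneous width of the merge frontier, and no per-step bound of the kind you want holds. You flag this gap honestly at the end, but it cannot be bridged; the paper avoids it entirely by bounding the \emph{number of rounds} of a $d$-way merge (via a $\pi$-avoiding sources-by-rounds matrix) while keeping all surviving sources in the queue, and by first ensuring the number of sources is at most $n/\log n$ so that the $O(\log n)$ per queue operation aggregates to $O(n)$.

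Two further problems compound this. First, even granting a fast merge, you start it with $k = n/c_\pi$ sources, which for a fixed small $\pi$ is $\Theta(n)$; the paper's merge theorem (\cref{thm:efficient-merge}) requires $m \le n/\log n$ precisely because a priority queue over $\Theta(n)$ sources already costs $\Theta(n)$ just to initialize and $\Theta(\log n)$ per operation thereafter. The paper reaches this regime by using initial blocks of length $\lceil\log^{(3)} n\rceil$ (not $c_\pi$) and a few preliminary consolidation layers, not by cutting directly to block length $c_\pi$. Second, the outer geometric-doubling wrapper you propose restarts from scratch at each new guess $c$; since a trial at guess $c$ is budgeted at $\Theta(n(\log c + 1))$, summing over $c = 2,4,\ldots,\Theta(c_\pi)$ gives $\Theta(n(\log c_\pi)^2)$, a quadratic loss. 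The paper's \cref{alg:merge} doubles its internal parameter $d$ \emph{without} restarting --- the output emitted in earlier phases is already a correct prefix of the sorted sequence --- and this amortization, giving $O(n)$ per phase over $O(\log c_\pi)$ phases, is what makes pattern-agnostic doubling affordable.

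Finally, the paper's handling of the base blocks is also essentially different. You sort each length-$c_\pi$ block with an off-the-shelf $O(c_\pi\log c_\pi)$ sort, which matches the target only because the block length was tuned to $c_\pi$; making that tuning pattern-agnostic is exactly what produces the quadratic doubling loss above. The paper instead sorts its (short, $\pi$-independent) blocks by enumerating decision trees in order of increasing height and adopting the first one that sorts all inputs; Fredman's theorem guarantees a tree of height $O((\log c_\pi+1)\cdot k)$ exists, so at most $2^{2^{O(k\log k)}}$ candidates are discarded, which is $o(n)$ for $k = \lceil\log^{(3)} n\rceil$. This decision-tree component is the missing ingredient that lets the algorithm achieve the information-theoretic rate without any prior knowledge of $\pi$.
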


The algorithm is built from two components.
The first one is an efficient multi-way merge procedure capable of merging up to $\frac{n}{\log n}$ sequences with $n$ elements in total, as long as the sequences originate from a $\pi$-avoiding input.
Its analysis is heavily based on the Marcus-Tardos theorem~\cite{MarcusTardos}.

The second component is an efficient algorithm to sort a large set of short sequences.
Naturally, many sequences in such a set must be pairwise order-isomorphic, which suggests that we could try grouping them into equivalence classes and then actually sorting only one sequence from each class.
Unfortunately, computing the equivalence classes is in some sense as hard as sorting itself.
Instead, we take a different approach of guessing a shallow decision tree for sorting $\pi$-avoiding permutations guaranteed by the result of Fredman~\cite{Fredman}.

\cref{thm:main-result} follows by first cutting up the input into sequences of length roughly $\log \log \log n$ which we can efficiently sort using the optimal decision tree.
The sorted sequences then serve as a basis for a bottom-up merge sort using the efficient multi-way merge.

Furthermore, let us point out that \cref{thm:main-result} achieves the information-theoretic lower bound for arbitrary $\pi$-avoiding sequences, not only permutations.
This is in sharp contrast with previous works~\cite{FOCS15,BerensohnKO2023} where the generalization beyond permutations comes with a price of worsening the dependence on $\pi$.
In order to achieve this, we strengthen a combinatorial result of Cibulka~\cite{Cibulka2009} and show that the number of $\pi$-avoiding $n \times n$  binary matrices with $n$ 1-entries is at most $c_\pi^{2n + O(1)}$.

\medskip

Recall that there is a connection between pattern-avoiding permutations and permutations of bounded twin-width.
In particular, any permutation of twin-width $d$ avoids a certain pattern $\pi$, the so-called $d \times d$-grid with Füredi-Hajnal limit $c_\pi \in 2^{O(d)}$.
It follows that the algorithm of \cref{thm:main-result} can also be used to sort permutations of bounded twin-width in linear time, without having access to a decomposition of small width.

\begin{corollary}\label{cor:sort-tww}
There is a comparison-based algorithm that sorts a permutation of length $n$ and twin-with $d$ in $O(d \cdot  n)$ time even if $d$ is a priori unknown.
\end{corollary}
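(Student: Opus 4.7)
The plan is to invoke \cref{thm:main-result} as a black box, letting the twin-width hypothesis supply a ``hidden'' pattern avoidance that the algorithm exploits without being told about it.

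First, I would recall the Guillemot--Marx correspondence already cited above: every permutation of twin-width at most $d$ avoids a specific pattern $\pi_d$ (the $d \times d$-grid) whose Füredi--Hajnal limit satisfies $c_{\pi_d} = 2^{O(d)}$, so that $\log c_{\pi_d} + 1 = O(d + 1)$. This is a purely combinatorial statement and requires no algorithmic extraction of a contraction sequence or a grid.

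Next, I would feed the input permutation $\sigma$ of length $n$ directly to the pattern-agnostic algorithm of \cref{thm:main-result}. The point is that the runtime bound in \cref{thm:main-result} holds for \emph{every} pattern that $\sigma$ happens to avoid, and the algorithm itself never needs to see such a pattern. Since $\sigma$ avoids $\pi_d$, the algorithm terminates in time
\[
O\bigl((\log c_{\pi_d} + 1) \cdot n\bigr) \;=\; O((d + 1) \cdot n),
\]
which, after absorbing the $d = 0$ edge case into the multiplicative constant, yields the stated $O(d \cdot n)$ bound.

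There is essentially no obstacle: the corollary is a direct specialization of \cref{thm:main-result} combined with the already-quoted Guillemot--Marx estimate. The only conceptual point worth emphasizing, and the reason a pattern-agnostic algorithm is needed here rather than a pattern-specific one, is that we never compute a twin-width decomposition of $\sigma$ nor identify $d$; doing either explicitly is substantially harder than sorting itself, so it is crucial that \cref{thm:main-result} adapts to the hidden pattern automatically.
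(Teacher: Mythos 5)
Your proposal is correct and matches the paper's own reasoning: the paper derives \cref{cor:sort-tww} exactly as you do, by feeding the permutation to the pattern-agnostic algorithm of \cref{thm:main-result} and invoking the Guillemot--Marx fact that a permutation of twin-width $d$ avoids the $d\times d$-grid pattern $\pi$ with $c_\pi \in 2^{O(d)}$, so the $O((\log c_\pi + 1)\cdot n)$ bound specializes to $O(d\cdot n)$ without ever computing a decomposition. The paper never states this as a formal proof, but the one-line deduction preceding the corollary is identical in substance to yours.
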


Guillemot and Marx~\cite{GM_PPM} devised a fixed-parameter tractable (fpt-) algorithm that computes a twin-width decomposition of width $2^{O(d)}$ for any permutation $\tau$ of twin-width $d$ and length $n$ in $O(n)$ time.
Crucially, their algorithm assumes access to $\tau$ in sorted order by values, and otherwise its running time degrades to $O(n \log n)$ due to the extra sorting step.
However, \cref{cor:sort-tww} speeds up the sorting step to linear time, and thus a $2^{O(d)}$-wide decomposition can be computed in $O(d \cdot n)$ time even in the comparison-based model.
As a consequence, the running time of several fpt-algorithms can be improved from $O_d(n \log n)$ to $O_d(n)$ when the input permutation of twin-width $d$ is given as a comparison oracle without any precomputed decomposition.
This improvement concerns
\begin{enumerate*}[label=(\roman*)]
\item the permutation pattern matching algorithm by Guillemot and Marx~\cite{GM_PPM},
\item the algorithm for first-order model checking by Bonnet, Kim, Thomassé and Watrigant~\cite{BonnetKTW22},
\item the algorithm for factoring pattern-avoiding permutations into separable ones by Bonnet, Bourneuf, Geniet and Thomassé~\cite{BonnetBourneufEtAl2024}, and
\item algorithms for various optimization problems on pattern-avoiding inputs by Berendsohn, Kozma and Opler~\cite{BerensohnKO2023}.
\end{enumerate*}

\paragraph{Structure of the paper.}
In \cref{sec:prelims}, we introduce the necessary notions and review relevant facts about the Füredi-Hajnal limits.
An efficient algorithm for multi-way merging is presented in \cref{sec:merge}, followed by an efficient algorithm for sorting a large number of short sequences in \cref{sec:sorting-short}.
Finally, \cref{thm:main-result} is proved in \cref{sec:algo} by combining the two algorithms.

\section{Preliminaries}\label{sec:prelims}

\paragraph{Permutations and avoidance.}

A \emph{permutation} $\pi$ of length~$n$ is a sequence $\pi_1, \dots, \pi_n$ in which each number from the set $[n] = \{1, \dots ,n\}$ appears exactly once.
Alternatively, we can represent~$\pi$ as an $n \times n$ binary permutation matrix $M_\pi$ where the cell in the $i$th column and $j$th row, denoted $M_\pi(i,j)$, contains a 1-entry if and only if $j = \pi_i$.
Note that for consistency with permutations, we number rows of matrices bottom to top.
A \emph{permutation sequence} is any sequence $s_1, \dots, s_n$ of pairwise distinct values.
We say that two sequences (not necessarily permutations) $s_1, \dots, s_n$ and $t_1, \dots, t_n$ are \emph{order-isomorphic} if for each $i,j \in [n]$, we have $s_i < s_j$ if and only if $t_i < t_j$.
We say that a sequence $S$ \emph{contains} a permutation $\pi$ if some subsequence of $S$ is order-isomorphic to $\pi$.
Otherwise, we say that $S$ \emph{avoids}~$\pi$. 
See \Cref{fig:perm-reprs}.

\begin{figure}
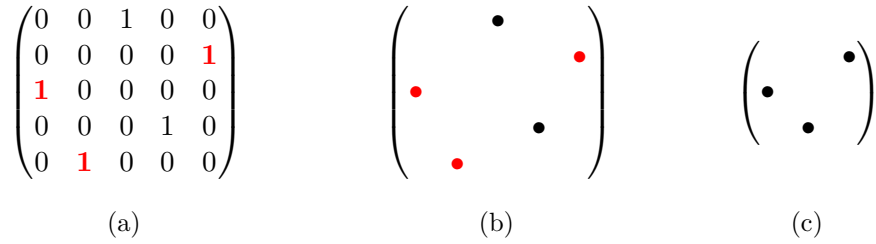

	\centering
	\begin{subfigure}[c]{0.3\textwidth}%
		\centering%
		\[	\begin{pmatrix}
			0&0&1&0&0\\
			0&0&0&0&\textcolor{red}{\textbf{1}}\\				
			\textcolor{red}{\textbf{1}}&0&0&0&0\\
			0&0&0&1&0\\
			0&\textcolor{red}{\textbf{1}}&0&0&0
		\end{pmatrix}
		\]
	\end{subfigure}%
	\begin{subfigure}[c]{0.3\textwidth}
	\centering%
\[	\begin{pmatrix}
	 & &\bullet& & \\
	 & & & &\textcolor{red}{\bullet}\\				
	\textcolor{red}{\bullet}&&&&\\
	 &&&\bullet&\\
	&\textcolor{red}{\bullet}&&&
\end{pmatrix}
\]
	\end{subfigure}%
	\begin{subfigure}[c]{0.2\textwidth}
	\centering%
\[	\begin{pmatrix}
	& & \bullet\\				
	\bullet&&\\
	&\bullet&
\end{pmatrix}
\]
	\end{subfigure}%
	\vspace{1mm}
	
	\begin{subfigure}{0.3\textwidth}
		\caption{}\label{fig:perm-reprs-a}
	\end{subfigure}%
	\begin{subfigure}{0.3\textwidth}
		\caption{}\label{fig:perm-reprs-b}
	\end{subfigure}%
	\begin{subfigure}{0.2\textwidth}
		\caption{}\label{fig:perm-reprs-c}
	\end{subfigure}
	
	\caption{The permutation $3,1,5,2,4$ as a matrix (a) and using bullets for 1-entries and blanks for 0-entries following convention~(b); an occurrence of the pattern $2,1,3$~(c) is highlighted.}\label{fig:perm-reprs}
\end{figure}

Additionally, we define avoidance for binary matrices.
We say that a binary matrix $M$ \emph{contains} a binary matrix $P$ if $P$ can be obtained from $M$ by deleting rows or columns and changing an arbitrary subset of 1-entries to zero.
Otherwise, $M$ \emph{avoids}~$P$.

\paragraph{Furedi Hajnal limits.}
For a permutation $\pi$, let $\ex_\pi(n)$ denote the maximum number of 1-entries in a $\pi$-avoiding $n\times n$ binary matrix.
The celebrated \emph{Marcus-Tardos theorem}~\cite{MarcusTardos}, previously known as the \emph{Füredi-Hajnal conjecture}, states that $\ex_\pi(n)$ is at most $O(n)$ for each fixed $\pi$.
It was observed by Cibulka~\cite{Cibulka2009} that the limit $c_\pi = \lim_{n \to \infty} \frac{1}{n} \ex_\pi(n)$ exists.
This limit is known as the \emph{Füredi-Hajnal limit} of $\pi$.
Furthermore, we have $\ex_\pi(n) \le c_\pi \cdot n$ for all $n$ due to the superadditivity of $\ex_\pi(\cdot)$~\cite{Cibulka2009}, which easily extends to rectangular matrices.

\begin{lemma}\label{lem:marcus-tardos-rect}
Every $m \times n$ binary matrix with strictly more than $c_\pi \cdot \max(m,n)$ 1-entries contains~$\pi$.
\end{lemma}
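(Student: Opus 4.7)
The plan is to reduce the rectangular statement to the square case, which has already been recorded in the excerpt as the bound $\ex_\pi(N) \le c_\pi \cdot N$ for all $N$. Let $M$ be an $m \times n$ binary matrix with strictly more than $c_\pi \cdot \max(m,n)$ ones, and set $N = \max(m,n)$. My first move is to embed $M$ into an $N \times N$ matrix $M'$ by padding with all-zero rows (if $m < N$) or all-zero columns (if $n < N$). This preserves the exact set of 1-entries, so $M'$ has the same number of ones as $M$, which is strictly more than $c_\pi \cdot N$.

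Next I would invoke $\ex_\pi(N) \le c_\pi \cdot N$ to conclude that $M'$ cannot avoid $\pi$; that is, $M'$ contains $\pi$. It remains to transfer this containment back to $M$. Using the matrix-containment definition given just before the lemma, a containment witness is a choice of rows and columns of $M'$ together with a subset of 1-entries in them that form a copy of the permutation matrix $M_\pi$. Since $M_\pi$ is a permutation matrix with no all-zero row or column, none of the rows or columns selected by the witness can come from the padded zero strip. Hence the witness lies entirely inside the original rows and columns of $M$, which shows $M$ contains $\pi$.

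There is really no hard step here, and I would not expect to encounter any obstacle beyond making the padding argument clean; the only point worth emphasizing explicitly is why the padded rows and columns cannot participate in the witness, which is what rules out any artificial containment created by padding. The whole proof should fit in a few lines.
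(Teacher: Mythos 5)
Your proof is correct and matches the approach the paper intends: the paper does not spell out a proof, merely asserting that the square bound $\ex_\pi(N)\le c_\pi N$ ``easily extends to rectangular matrices,'' and padding the $m\times n$ matrix with all-zero rows or columns to an $N\times N$ matrix with $N=\max(m,n)$ is exactly that easy extension. Your justification that a witness of $\pi$ cannot use a padded all-zero row or column (because $M_\pi$ is a permutation matrix with a 1-entry in every row and column, and containment only allows turning 1s into 0s) is the right detail to make explicit, and the whole argument is sound.
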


The \emph{Stanley-Wilf conjecture} states that the number of $\pi$-avoiding permutations of length $n$ grows at most single-exponentially in $n$ .
Let~$\Av_n(\pi)$ be the set of all $\pi$-avoiding permutations of length $n$. The \emph{Stanley-Wilf limit} $s_\pi$ of $\pi$ is defined as
$s_\pi = \lim_{n \rightarrow \infty} \sqrt[n]{|\Av_n(\pi)|}.$

Klazar~\cite{Klazar} showed that the Stanley-Wilf conjecture is implied by the Füredi-Hajnal conjecture.
Importantly, Cibulka~\cite{Cibulka2009} proved that $s_\pi$ and $c_\pi$ are polynomially related, in particular $c_\pi \in O(s_\pi^{4.5})$ and $s_\pi \in O(c_\pi^2)$.
This is important for our purposes, since we can use $\log c_\pi$ and $\log s_\pi$ interchangeably within asymptotic notation.
In particular, from now on we use mostly only the Füredi-Hajnal limit $c_\pi$.

By applying the information-theoretic lower bound together with the relationship between $s_\pi$ and $c_\pi$, we obtain the following observation showing that the dependence on $\pi$ in \cref{thm:main-result} is asymptotically optimal.

\begin{observation}
Any comparison-based sorting algorithm must make at least $\Omega((\log c_\pi +1) \cdot n)$ comparisons in the worst case on $\pi$-avoiding sequences. 	
\end{observation}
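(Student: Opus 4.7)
The plan is to apply the standard information-theoretic decision-tree lower bound to the restricted input family $\Av_n(\pi)$, and then translate the resulting bound into one in terms of $c_\pi$ via Cibulka's polynomial relationship between $s_\pi$ and $c_\pi$. Concretely, any correct comparison-based algorithm restricted to inputs from $\Av_n(\pi)$ yields a binary decision tree in which any two distinct permutations must reach different leaves, since otherwise they would be sorted identically. Hence the tree has at least $|\Av_n(\pi)|$ leaves, and its worst-case depth is at least $\lceil \log_2 |\Av_n(\pi)| \rceil$.

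Next I would extract a matching lower bound on $\log_2 |\Av_n(\pi)|$ directly from the definition of the Stanley-Wilf limit. Since $|\Av_n(\pi)|^{1/n} \to s_\pi$, for every $\varepsilon > 0$ one has $|\Av_n(\pi)| \geq (s_\pi - \varepsilon)^n$ for all sufficiently large $n$. When $s_\pi \geq 2$, taking $\varepsilon = s_\pi/2$ gives $\log_2 |\Av_n(\pi)| \geq n(\log_2 s_\pi - 1) = \Omega(n \log s_\pi)$. When $s_\pi < 2$ we instead have $\log s_\pi = O(1)$, and the trivial observation that a comparison-based sorter must perform $\Omega(n)$ comparisons (so that each input element participates in at least one comparison, which is necessary to fix its relative position) already dominates. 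In both regimes we obtain a worst-case lower bound of $\Omega((\log s_\pi + 1) \cdot n)$.

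Finally, the polynomial relations $s_\pi \in O(c_\pi^2)$ and $c_\pi \in O(s_\pi^{4.5})$ recalled in the preliminaries imply $\log s_\pi \leq 2 \log c_\pi + O(1)$ and $\log c_\pi \leq 4.5 \log s_\pi + O(1)$, so $\log s_\pi + 1 = \Theta(\log c_\pi + 1)$ up to a global multiplicative constant. Substituting this into the previous bound yields $\Omega((\log c_\pi + 1) \cdot n)$, as claimed.

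The only mildly subtle point is the degenerate case when $\pi$ has length at most $2$, where $|\Av_n(\pi)|$ is $0$ or $1$ and the decision-tree bound alone is vacuous; the "$+1$" in the statement, combined with the trivial linear lower bound, absorbs this cleanly, so no separate argument is needed. Otherwise the proof is a routine chaining of well-known ingredients already summarized earlier in the section.
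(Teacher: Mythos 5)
Your proof matches the paper's intended argument exactly: apply the standard decision-tree lower bound to $\Av_n(\pi)$, lower-bound $\log |\Av_n(\pi)|$ via the Stanley--Wilf limit, fall back on the trivial $\Omega(n)$ bound when $s_\pi$ is small, and translate $\log s_\pi$ into $\log c_\pi$ via Cibulka's polynomial relations. The paper states the observation without a detailed proof, sketching precisely this chain in the introduction, so your write-up is a faithful expansion. One small slip: the intermediate claim ``$n(\log_2 s_\pi - 1) = \Omega(n \log s_\pi)$'' does not hold with a pattern-independent constant for $s_\pi$ slightly above $2$, since $\log_2 s_\pi - 1 \to 0$ while $\log s_\pi \to 1$; the clean threshold is $s_\pi \ge 4$, below which the trivial $\Omega(n)$ term carries the bound, and above which $\log_2 s_\pi - 1 \ge \tfrac{1}{3}(\log_2 s_\pi + 1)$. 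This is harmless because your final combined bound is exactly what is claimed, but the case split as written is off by a constant factor.
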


\section{Efficient multi-way merge}
\label{sec:merge}

In this section, we show how we can leverage the Marcus-Tardos theorem~\cite{MarcusTardos} to implement an efficient multi-way merging procedure that can handle up to $\frac{n}{\log n}$ many pattern-avoiding sequences in linear time.

\begin{theorem}\label{thm:efficient-merge}
There is an algorithm that receives on input a $\pi$-avoiding sequence $S$ of length $n$ partitioned into $m$ presorted sequences $S_1, \dots, S_m$ with $m \le \frac{n}{\log n}$, and outputs $S$ sorted in $O((\log c_\pi + 1) \cdot  n)$ time even if $\pi$ is a priori unknown.
\end{theorem}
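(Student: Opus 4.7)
The plan is to process the sorted output in consecutive value blocks of size roughly $\log n$, using only a small local heap per block, and to bound the total work via Marcus--Tardos. Conceptually, partition the eventual sorted output into blocks $B_1,\dots,B_{n/\log n}$, each containing $\log n$ of the smallest remaining values, and let $a_j$ denote the number of input sequences $S_i$ that contain at least one element with value in $B_j$. If we can identify the $a_j$ contributing sequences for each block and restrict a min-heap to just those sequences, producing the sorted contents of $B_j$ costs $O(|B_j|\log a_j + a_j)=O(\log n\cdot\log a_j + a_j)$, so the total running time is $\sum_j(\log n)\log a_j + \sum_j a_j$.

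The analytic core is a bound $\sum_j a_j = O(c_\pi\cdot n/\log n)$ obtained by encoding the sequence--block incidences in an $m\times(n/\log n)$ binary matrix $M$ with $M(i,j)=1$ iff $S_i$ contributes to $B_j$, choosing a row ordering for which any occurrence of $\pi$ in $M$ as a matrix pattern produces an occurrence of $\pi$ in $S$, and applying \Cref{lem:marcus-tardos-rect}. Granted this bound, Jensen's inequality on the concave function $\log$ gives
\[ \sum_j \log a_j \;\le\; \tfrac{n}{\log n}\log\!\left(\tfrac{\sum_j a_j}{n/\log n}\right) \;=\; O\!\left(\tfrac{n}{\log n}\,(\log c_\pi+1)\right), \]
so $\sum_j(\log n)\log a_j = O((\log c_\pi+1)\,n)$ and $\sum_j a_j = O(n)$; adding both gives the desired bound. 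Crucially, the algorithm itself neither knows $\pi$ nor uses $c_\pi$; these quantities appear only in the analysis.

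The main obstacle will be justifying the matrix claim. A naive row ordering such as ``by position in $S$ of the smallest element of each sequence'' does not suffice: one can construct examples where $M$ contains a matrix-pattern that is not witnessed by a genuine pattern occurrence in $S$, because a single sequence can have elements with wildly different positions even when its minimum is fixed. I expect to resolve this by pairing a careful choice of representative element for each incidence (e.g.\ the leftmost element of $S_i$ within $B_j$) with a row ordering that correctly reflects the positions of these representatives across columns, or alternatively by working with a finer two-dimensional coarsening of $S$ (rows = position blocks, columns = value blocks) and lifting its Marcus--Tardos bound to $M$ via an extra argument that controls the multiplicity of sequences per coarsened cell. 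A secondary, implementation-level obstacle is discovering block boundaries on the fly and resizing the active-sequence heap between blocks in total time $O(\sum_j a_j)=O(n)$; this should be manageable using a lightweight global structure that signals boundary events, but the details require care so that no $\log m$ factor sneaks in through the bookkeeping.
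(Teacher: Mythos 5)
Your gridding-plus-Marcus--Tardos engine is the same one the paper uses, and the Jensen step is a nice way to package the incidence bound, but the proposal has one phantom obstacle and two genuine gaps. The phantom obstacle is the row-ordering worry: the theorem is applied (and implicitly meant) with $S_1,\dots,S_m$ being \emph{consecutive} blocks of positions of $S$, so the rows of $M$ ordered by block index already agree with left-to-right position order in $S$, and picking one representative element per $1$-entry of a matrix-pattern occurrence does give a bona fide occurrence of $\pi$ in $S$. Your hypothetical counterexample requires a sequence whose positions are scattered across $S$, which cannot happen here.

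The first genuine gap is exactly the one you flagged as ``secondary'' and ``manageable'': to build a local heap over the $a_j$ contributing sequences of block $B_j$ you must locate them, but the value range of $B_j$ is not known until the block has been produced. Any implementation that keeps a global priority queue of sequences keyed by their current fronts pays $\Omega(\log m)$ per operation, and since a sequence is re-inserted once for each block it survives, the number of such operations is on the order of $\sum_j a_j$, yielding overhead $\Omega\bigl((\sum_j a_j)\log m\bigr)$, which is $\Omega(c_\pi n)$ in the worst case and exceeds the $O((\log c_\pi+1)n)$ budget. The paper avoids this by controlling the \emph{heap size} rather than the \emph{block size}: it fixes the number $d$ of sequences touched per round, pre-merges the input into only $m'\le n/(d\log n)$ sequences so that the $O(d\log n)$ per-round priority-queue overhead totals $O(n)$, and obtains each round's value threshold on the fly from the $(d+1)$st smallest front; because $d$ must scale with the unknown $c_\pi$, it also runs a doubling loop over $O(\log c_\pi)$ phases. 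Your block scheme would need to reinvent essentially all of this to close the gap. The second genuine gap is repeated values: lifting a matrix-pattern occurrence to an occurrence of $\pi$ in $S$ requires the chosen representatives to have pairwise distinct values, which can fail for general (non-permutation) $\pi$-avoiding sequences. The paper devotes real effort to this (the heavy/light rounds analysis in \cref{lem:phases-total-general}), and your proposal does not address it at all.
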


Note that \cref{thm:efficient-merge} immediately implies a divide-and-conquer algorithm for sorting $\pi$-avoiding sequences in $O_\pi(n \log^\star n)$ time.
However, that is still very far from beating the fastest previous algorithm that sorts $\pi$-avoiding permutations in $n \cdot 2^{O(|\pi|^2) + (1+o(1))\alpha(n)}$ time~\cite{ChalermsookPettieY}.

\subsection{\texorpdfstring{Prior knowledge of $\pi$}{Prior knowledge of π}}

First, let us assume that the algorithm knows (or is given) the Füredi-Hajnal limit $c_\pi$.
Moreover, we assume that the input is a permutation sequence, i.e., its all elements are pairwise distinct.

Set $d = \lceil 2 c_\pi \rceil$.
We first reduce the number of presorted sequences to $m' \le \frac{n}{d \log n}$ by merging together consecutive $d$-tuples of sequences, with potentially doing one extra two-way merge at the end.
Let $S'_1, \dots, S'_{m'}$ denote the resulting sequences.
If all sequences already got merged into a single one, the algorithm simply copies it on output and terminates.

Afterwards, the algorithm initializes a priority queue $Q$ containing sequences $S'_1, \dots, S'_{m'}$ ordered by their initial elements and proceeds in rounds.
In each round, the algorithm extracts from~$Q$ the $d+1$ sequences starting with the smallest elements.
Let $i_1, \dots, i_{d+1}$ denote their indices.
The algorithm then performs a standard $d$-way merge of $S'_{i_1}, S'_{i_2}, \dots, S'_{i_d}$ restricted to the elements smaller than the initial element (minimum) of $S'_{i_{d+1}}$.
We say that the sequences $S'_{i_1}, S'_{i_2}, \dots, S'_{i_d}$ are \emph{touched} by the algorithm in this round.
At the end of the round, each sequence $S'_{i_j}$ is pushed back in~$Q$ unless it has already been emptied.
Note that if $Q$ contains at most $d$ sequences at the beginning of any round, the algorithm simply merges them all together and terminates.
This is repeated until all sequences $S'_1, \dots, S'_{m'}$ are exhausted.
See \cref{alg:merge-known}.
Observe that the algorithm produces the same result as a standard multi-way merge and thus, it correctly outputs a sorted sequence.

\begin{algorithm}[t] 	\caption{Algorithm for merging $\pi$-avoiding input $S$ partitioned into presorted sequences $S_1, \dots, S_m$ with prior knowledge of $\pi$.}\label{alg:merge-known}
	\begin{algorithmic}
		\State $d \gets \lceil 2 c_\pi \rceil$
		\State $m' \gets \lfloor \frac{m}{d} \rfloor$
		\State $S'_1, \dots, S'_{\lceil m / d \rceil} \gets $ merges of consecutive $d$-tuples from $S_1, \dots, S_m$
		\If{$m$ is not divisible by $d$}
		\State $S'_{\lfloor m/d\rfloor} \gets$ merge of $S'_{\lfloor m/d \rfloor}$ and $S'_{\lceil m/d \rceil}$ \Comment{The number of sequences is now exactly $m'$.}
				\EndIf
		\State $Q \gets$ priority queue containing $S'_1, \dots, S'_{m'}$ ordered by their initial elements
		\While{$Q$ is non-empty}
		\If{$|Q| \le d$}
		\State{Output the merge of all sequences in $Q$ and terminate.}
		\EndIf
		\State $(i_1, i_2,  \dots, i_{d+1}) \gets$ pop $d+1$ sequences from $Q$
		\State $x \gets$ initial element of the sequence $S'_{i_{d+1}}$
		\State Output $d$-way merge of $S'_{i_1}, S'_{i_2}, \dots, S'_{i_d}$ restricted to elements smaller than $x$.
		\For{$j \in \{1, \dots, d+1\}$}
		\If{$S'_{i_j}$ is still non-empty}
		\State Push $S'_{i_j}$ back to $Q$.
		\EndIf
		\EndFor
		\EndWhile

	\end{algorithmic}
\end{algorithm}

\paragraph{Running time.}
Clearly, each element in the input incurs a time of $O(\log d)$ when written on the output, since it is obtained from a merge of the $d$ way.
It remains to bound the initial preproccessing time and the extra overhead per one round together with the total number of rounds.

The initial merging of input sequences in $d$-tuples takes $O(\log d \cdot n)$ time using the standard multi-way merge implementation with binary heap.
It suffices to implement the priority queue $Q$ using the standard binary heap.
Its initialization takes only $O(m') \subseteq O(\frac{n}{\log n})$ time.
In each round, we pop $d+1$ sequences from the priority queue~$Q$ and push at most $d+1$ sequences back for a total time of $O(d \cdot \log m') \subseteq O(d \cdot \log n)$.
Therefore, each round incurs an overhead of $O(d \cdot \log n)$.

The total running time then follows straightforwardly from the following claim.

\begin{claim}\label{claim:rounds-total}
When $S$ is a $\pi$-avoiding permutation sequence, \cref{alg:merge-known} terminates in $m' \le \frac{n}{d \cdot \log n}$ rounds.
\end{claim}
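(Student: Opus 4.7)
The plan is to encode the round-by-round touching pattern of \cref{alg:merge-known} in a binary matrix and apply \cref{lem:marcus-tardos-rect}. Let $R$ be the number of while-loop iterations. Define an $m' \times R$ matrix $M$ whose rows correspond to the sequences $S'_1, \ldots, S'_{m'}$ in their natural order, whose columns correspond to the rounds in chronological order, and with a $1$ at $(i, r)$ iff $S'_i$ is touched in round $r$. Every non-terminal round touches exactly $d$ sequences and the terminal round touches at least one, so $M$ has at least $d(R-1) + 1$ ones.

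Attach to each $1$-entry at $(i, r)$ the element $e_{i,r} \in S$ defined as the smallest remaining element of $S'_i$ at the start of round $r$. Because $S'_i$ is among the $d$ sequences with smallest initials in $Q$, $e_{i,r}$ lies strictly below the round's threshold $x$ and is therefore output during round $r$. Two monotonicities follow. First, $r < r'$ implies $\mathrm{val}(e_{i,r}) < \mathrm{val}(e_{i',r'})$: everything output in round $r$ lies below the threshold $x$, whereas the first element output in round $r+1$ equals this same $x$. Second, $i < i'$ implies $\mathrm{pos}(e_{i,r}) < \mathrm{pos}(e_{i',r'})$, since consecutive $d$-tuples of presorted runs form contiguous blocks of $S$.

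Combined, the two monotonicities translate a matrix-pattern occurrence of $\pi$ in $M$ into a $\pi$-pattern in $S$: if rows $i_1 < \cdots < i_k$ and columns $c_1 < \cdots < c_k$ witness $\pi$ in $M$, then the elements $e_{i_j, c_{\pi_j}}$ realize $\pi$ in $S$, with positions increasing in $j$ and values ordered according to $\pi$. Since $S$ avoids $\pi$, so does $M$, and \cref{lem:marcus-tardos-rect} bounds its $1$-count by $c_\pi \cdot \max(m', R)$. Putting this together with the lower bound, if $R > m'$ then the inequality $d(R-1) + 1 \le c_\pi R$ combined with the choice $d = \lceil 2 c_\pi \rceil \ge 2 c_\pi$ yields $R \le (d-1)/(d - c_\pi) < 2$, hence $R \le 1$; but then $R > m' \ge 0$ forces $m' = 0$, in which case the while loop never executes and $R = 0$, a contradiction.

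The main obstacle, as I see it, is the matrix-to-sequence translation: setting up $e_{i,r}$ and verifying both monotonicities requires careful unwinding of how the threshold propagates between rounds and of how the partition sits inside $S$. Once these are in place, \cref{lem:marcus-tardos-rect} finishes the argument in a few lines.
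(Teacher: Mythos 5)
Your proposal is correct and follows essentially the same strategy as the paper's proof: build the sequence-by-round incidence matrix, lower-bound its 1-count via the $d$ sequences touched per round, and translate any matrix occurrence of $\pi$ back into $S$ using exactly the two monotonicities you identify (sequences partition $S$ into left-to-right blocks, rounds partition $S$ into value intervals) before invoking \cref{lem:marcus-tardos-rect}. The paper phrases the final contradiction by directly comparing $1$-counts (noting $(R-1)\cdot 2c_\pi \ge R c_\pi$ for $R\ge 2$) rather than solving for $R\le 1$, but this is a cosmetic difference only.
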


\begin{proof}
Observe that the presorted sequences $S'_1, \dots S'_{m'}$ partition $S$ into consecutive intervals with respect to the left-to-right order, while the individual rounds partition $S$ into consecutive intervals of elements with respect to their values.
Together, they form a gridding of sorts that we aim to use in conjunction with the Marcus-Tardos theorem.
Let $R$ denote the total number of rounds  and let us define an $m' \times R$ matrix $M$ as follows
\[M(i,j) = \begin{cases}
 1 &\text{if the sequence $S'_i$ was touched in the $j$th round}\\
 0 &\text{otherwise.}
\end{cases}\]

Assume for a contradiction that $R > m'$.
Since the algorithm touches exactly $d$ sequences in each round except for possibly the last one, each row in $M$ except for possibly the last one contains exactly $d$ 1-entries.
In total, that makes the number of 1-entries in $M$ is strictly larger than
\[(R-1) \cdot d \ge (R-1) \cdot 2 c_\pi \ge R \cdot c_\pi\]
where the last inequality holds since $R \ge 2$.
Now \cref{lem:marcus-tardos-rect} implies that $M$ contains an occurrence of $\pi$ because by our assumptions $\max(m', R) = R$.
Therefore, there are indices $i_1< \dots < i_{|\pi|}$ and $j_1, \dots, j_{|\pi|}$ such that $M(i_1, j_1), \dots, M(i_k,j_{|\pi|})$ are all 1-entries and $j_1, \dots , j_{|\pi|}$ is order-isomorphic to~$\pi$.
For $p \in [|\pi|]$, let $x_p$ be an arbitrary element of the sequence $S'_{i_p}$ that was outputted in the $j_p$th round.
It follows that $x_1, \dots, x_{{|\pi|}}$ form a subsequence of $S$ that is order-isomorphic to $\pi$ and we reach a contradiction.
\end{proof}

\subsection{\texorpdfstring{Without prior knowledge of $\pi$}{Without prior knowledge of π}}

Now, we consider the more general case where $\pi$ (and $c_\pi$ in particular) is not known to the algorithm.
However, we still restrict inputs to permutation sequences.
The basic idea is to guess $c_\pi$ by starting with a constant, trying the previous approach for a bounded number of rounds and doubling.

\paragraph{Description of the algorithm.}
Initially, we set $d = 1$ and the algorithm proceeds in \emph{phases} until all the elements are written on output.
At the beginning of each phase, the algorithm doubles the main parameter by setting $d \gets 2 d$, and halves the number of input sequences by merging them in pairs.
If the number of sequences at the start of the phase was odd, we perform one additional merge to ensure that we always decrease the number of sequences by at least half.
Let us denote by $d_i$ the value of $d$ in the $i$th phase.
It follows that in the $i$th phase, we work with $m_i \le \frac{m}{2^i}$ sequences and $d_i$ is equal to $2^i$.
In particular, we have $m_i \le \frac{n}{d_i \log n}$ in the $i$th phase.

Afterwards, the algorithm follows the case where $c_\pi$ was known, i.e. it proceeds in rounds where in each round, it takes the $d$-tuple of sequences with the smallest initial elements and performs a $d$-way merge until the value of the $(d+1)$-th smallest initial element is reached.
A standard binary heap is again used for efficiently retrieving these $d$ sequences.
The only change is that now the upper bound on the number of rounds is explicitly enforced rather than implicitly observed.
Specifically, the $i$th phase is terminated after $m_i$ rounds and the algorithm enters next phase.
This continues until all sequences are exhausted.
See \cref{alg:merge}.

\begin{algorithm}[t] 	\caption{Algorithm for merging $\pi$-avoiding input $S$ partitioned into presorted sequences $S_1, \dots, S_m$ without prior knowledge of $\pi$.}\label{alg:merge}
	\begin{algorithmic}
	\State $d \gets 1$
	\While{there are some elements remaining}
	\State $d \gets 2 \cdot d$
	\State $S_1, \dots, S_{\lceil m / 2 \rceil} \gets $ merges of consecutive pairs from $S_1, \dots, S_m$
	\If{$m$ is odd}
	\State $S_{\lfloor m/2\rfloor} \gets$ merge of $S_{\lfloor m/2 \rfloor}$ and $S_{\lceil m/2 \rceil}$ \Comment{The number of sequences is at most $\frac{m}{2^i}$.}
	\EndIf
	\State $m \gets \lfloor \tfrac{m}{2} \rfloor$ 
	\State $Q \gets$ priority queue with all non-empty sequences ordered by their initial elements
	\State $r \gets 1$ \Comment{Counter for the number of rounds}
	\While{$Q$ is non-empty and $r \le m$}
	\If{$|Q| \le d$}
	\State{Output the merge of all sequences in $Q$ and terminate.}
	\EndIf
	\State $(i_1, i_2,  \dots, i_{d+1}) \gets$ pop $d+1$ sequences from $Q$
	\State $x \gets$ initial element of the sequence $S_{i_{d+1}}$
	\State Output $d$-way merge of $S_{i_1}, S_{i_2}, \dots, S_{i_d}$ restricted to elements smaller than $x$.	
	\For{$j \in \{1, \dots, d+1\}$}
	\If{$S_{i_j}$ is still non-empty}
	\State Push $S_{i_j}$ back to $Q$.
	\EndIf
	\EndFor
		\State $r \gets r +1$
	\EndWhile
	\EndWhile
	\end{algorithmic}
\end{algorithm}

It is important that whenever a new phase begins, the algorithm does not reset to the very beginning as there is no reason to do so.
The output to this point is a valid prefix of the sorted sequence.
Thus, the next phase continues with only the remaining elements in each sequence.

As before, the algorithm clearly outputs a correctly sorted sequence, and it only remains to bound its running time.

\begin{claim}\label{claim:phases-total}
On a $\pi$-avoiding permutation sequence $S$, \cref{alg:merge} terminates after at most $\lceil \log c_\pi \rceil + 1$ phases.
\end{claim}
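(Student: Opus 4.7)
The plan is to reduce the claim to the single-phase analysis already carried out in \cref{claim:rounds-total}. Set $i^\star = \lceil \log c_\pi \rceil + 1$, so that the value of $d$ during phase $i^\star$ satisfies $d_{i^\star} = 2^{i^\star} \ge 2 c_\pi$, exactly matching the threshold used in the prior-knowledge algorithm. I will show that phase $i^\star$ necessarily terminates the outer loop, i.e., all remaining elements are emitted before the budget of $m_{i^\star}$ rounds is exhausted, whence no later phase is ever entered and the algorithm uses at most $i^\star$ phases.

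The first ingredient is a structural invariant: at any moment during the execution, the current list of sequences $S_1, \dots, S_{m_i}$ partitions the positions of the not-yet-output elements of the input $S$ into disjoint consecutive position intervals, ordered left-to-right by index. This holds initially by assumption on the input, is preserved by the pairwise merges of \emph{consecutive} sequences performed at the beginning of each phase (the union of two consecutive intervals is an interval), and is preserved by each round (which only deletes output elements without changing which positions each sequence covers). The second ingredient is a value-ordering observation inside a single phase: each round outputs only values strictly smaller than the initial element $x$ of the $(d+1)$-st popped sequence, while $x$ remains in $Q$ and serves as a lower bound on every value output in any later round of the same phase. Together, these give the same row/column position-value structure relied on in the proof of \cref{claim:rounds-total}.

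With the invariants in place, suppose for contradiction that phase $i^\star$ runs its full $m_{i^\star}$ rounds without the inner while-loop exiting. Then $|Q| > d_{i^\star}$ in each such round, so every round touches exactly $d_{i^\star}$ sequences. Form the $m_{i^\star} \times m_{i^\star}$ touch matrix $M$ whose rows are indexed by the sequences present at the start of phase $i^\star$ and whose columns are indexed by the rounds of that phase. Then $M$ contains at least $m_{i^\star} \cdot d_{i^\star} \ge 2 c_\pi \cdot m_{i^\star} > c_\pi \cdot m_{i^\star}$ one-entries, so \cref{lem:marcus-tardos-rect} produces an occurrence of $\pi$ in $M$. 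Selecting one output element per chosen cell—rows ordered by position-interval, columns ordered by value by the invariants above—yields a subsequence of $S$ order-isomorphic to $\pi$, contradicting $\pi$-avoidance of $S$.

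The main obstacle is establishing the two invariants cleanly, because by the time phase $i^\star$ begins the algorithm has already emitted a prefix of the sorted output and has repeatedly merged consecutive sequences, so one has to verify that both the ``consecutive position intervals'' property and the ``strictly increasing value ranges across rounds'' property survive this bookkeeping. Once these are justified, the counting argument is essentially a verbatim copy of the one in \cref{claim:rounds-total}, and no further work is needed.
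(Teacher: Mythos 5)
Your proposal is correct and follows the same route as the paper: set $i^\star = \lceil \log c_\pi \rceil + 1$ so that $d_{i^\star} \ge 2c_\pi$, and apply the matrix/Marcus--Tardos argument from \cref{claim:rounds-total} to phase $i^\star$ to conclude it must terminate within its budget of $m_{i^\star}$ rounds. The paper states this tersely as ``the exact same argument as in \cref{claim:rounds-total}''; you spell out the consecutive-interval and value-ordering invariants that justify reusing that argument on the remaining elements, which is a reasonable elaboration rather than a different method.
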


\begin{proof}
Assume for a contradiction that more phases than $\lceil \log c_\pi\rceil + 1$ are needed and set $i = \lceil \log c_\pi\rceil + 1 $.
In other words, the algorithm is not able to terminate in the $i$th phase. 
Observe that $ 2^i \ge 2 c_\pi$ and therefore, we have $d_i \ge 2 c_\pi$.
Recall that $m_i$ denotes the number of sequences in the $i$th phase. 
But since $d_i \ge 2 c_\pi$ and $m_i \le \frac{n}{d_i \log n}$, the exact same argument as in \cref{claim:rounds-total} implies that $m_i$ rounds suffice to merge all remaining elements, and thus the algorithm should have terminated in the $i$th phase regardless of the number of elements processed in previous phases. 
\end{proof}

\paragraph{Running time.}
As before, we charge each element with the time needed to retrieve it when it is outputted.
In the $i$th phase, the algorithm outputs elements using $2^i$-way merges, and thus each element incurs a cost of $O(i)$ which is at most $O(\log c_\pi + 1)$ by \cref{claim:phases-total}.

Now, we show that we only use linear extra overhead in each phase, again totaling $O((\log c_\pi + 1) \cdot n)$ time by \cref{claim:phases-total}.
The merging at the beginning of each phase takes clearly only linear time, since each element participates in at most two 2-way merges.
Moreover, the $i$th phase consists of at most $m_i \le \frac{n}{2^i \log n}$ rounds by design.
Each round can be handled with an overhead of $O(2^i \cdot \log m_i) \subseteq O(2^i \cdot \log n)$ for retrieving the $2^i$ smallest sequences from the binary heap at the beginning of the round and inserting them back at the end of the round.
In total, this produces an overhead of $O(2^i \cdot \log n \cdot m_i) \subseteq O(n)$ per each phase as promised.

\subsection{General sequences}
\label{sec:merge-general}
Finally, we remove the restriction of inputs to permutation sequences.
Let us first describe a standard way used to deal with general pattern-avoiding sequences in previous works~\cite{FOCS15,BerensohnKO2023}.
The repeated elements in an arbitrary $\pi$-avoiding sequence $S$ can be perturbed to obtain a permutation sequence $S'$ that avoids some pattern $\pi'$ where $|\pi'| = 2 |\pi|$.
It follows that a linear sorting algorithm for pattern-avoiding permutations can be applied to $S'$ in order to sort $S$.
See~\cite{FOCS15} for more details.
However, the relationship between $c_\pi$ and $c_{\pi'}$ is not known in general, and as a result, we might lose the optimal asymptotic dependence on the pattern $\pi$.

For this reason, we take a different path of carefully analyzing the behavior of \cref{alg:merge} on general $\pi$-avoiding sequences.
Specifically, we launch \cref{alg:merge} on arbitrary $\pi$-avoiding input while breaking all ties according to their position on input, i.e., we perform a stable merge.
Note that this tie-breaking can occur both when selecting the $d$-tuple of the smallest sequences in each round as well as during any merging.
We claim that the asymptotic bound on runtime does not deteriorate.

Observe that \cref{claim:phases-total} is the only place in the analysis of \cref{alg:merge} where we used the fact that $S$ is a permutation sequence.
It therefore suffices to show that analogous claim holds for general sequences and the promised runtime follows immediately.

\begin{lemma}\label{lem:phases-total-general}
On an arbitrary $\pi$-avoiding sequence $S$, \cref{alg:merge} terminates after at most $\lceil \log c_\pi\rceil + 2$  phases.
\end{lemma}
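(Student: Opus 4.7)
The plan is to mimic the proof of \cref{claim:phases-total}, using the factor-two density slack afforded by one extra phase ($d_i \ge 4 c_\pi$ instead of $2 c_\pi$) to absorb the value-tie complications introduced by the stable merge. Assume for contradiction that phase $i = \lceil \log c_\pi \rceil + 2$ runs its full $m_i$ rounds without emptying the priority queue; then $d_i = 2^i \ge 4 c_\pi$ and $m_i \le n/(d_i \log n)$. Form the $m_i \times m_i$ incidence matrix $M$ exactly as in \cref{claim:rounds-total}. Its at least $(m_i - 1) d_i \ge 2 c_\pi m_i$ 1-entries are \emph{twice} the density that \cref{lem:marcus-tardos-rect} needs to force a $\pi$-pattern in $M$.

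The new difficulty lies in converting such a $\pi$-pattern into a $\pi$-occurrence in $S$. Given cells $(i_p, j_p)$ realizing the pattern and representatives $x_p$ picked arbitrarily from each cell, the positions of the $x_p$ in $S$ are automatically strictly increasing in $p$ (since rows are disjoint position intervals), and every \emph{descent} of $\pi$ becomes a strict value descent: position and output orders disagree, and stability then rules out equality. However, every \emph{ascent} of $\pi$ only yields the weak inequality $x_p \le x_q$; equality arises exactly when $x_p$ and $x_q$ share a value, i.e., when both cells lie within a common maximal value-run of the stable sort. The naive extraction therefore no longer contradicts $\pi$-avoidance.

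To resolve the ascent ties I would remove a set $B$ of ``bad'' 1-entries from $M$ so that $|B| \le c_\pi m_i$ and every $\pi$-pattern in $M \setminus B$ uses at most one cell per value-run. The key structural fact is that for each value $v$ occurring in $S$, the cells of $M$ containing the occurrences of $v$ form a monotone staircase, weakly increasing in both rows and columns, because the stable merge outputs all copies of $v$ in position order. Marking as bad every cell of such a staircase except one leaves $M \setminus B$ with more than $c_\pi m_i$ 1-entries (hence still containing $\pi$ by \cref{lem:marcus-tardos-rect}), and any extracted $\pi$-pattern now has representatives in pairwise distinct value-runs; since distinct value-runs are strictly ordered by rank, the ascents become strict and we obtain a genuine $\pi$-occurrence in $S$, contradicting $\pi$-avoidance.

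The main obstacle is establishing $|B| \le c_\pi m_i$. The naive ``one per staircase'' count only gives $|B| \le n - L$, with $L$ the number of distinct values in $S$, which can vastly exceed the budget $c_\pi m_i$. A tighter argument must exploit simultaneously that each column of $M$ contains exactly $d_i$ 1-entries, that the staircase of each value-run occupies a contiguous range of columns, and that the total column count is capped at $m_i$, so as to amortize the duplicated cells against $M$'s total 1-entry count. The factor-two buffer $d_i \ge 4 c_\pi$, bought by adding one phase, is calibrated precisely to accommodate this tie-breaking overhead.
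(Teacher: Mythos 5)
Your proposal correctly isolates the real difficulty — that after a stable merge, a $\pi$-pattern extracted from the incidence matrix only gives weak inequalities $x_p \le x_q$ for the ascents of $\pi$, so it does not automatically produce a $\pi$-occurrence in $S$ — and your observation that descents remain strict by stability is sound. But the proposed fix, removing a ``bad'' set $B$ of 1-entries so that each value-run contributes at most one surviving cell, is left genuinely unfinished, and you acknowledge this yourself: the bound $|B| \le c_\pi m_i$ is never established, and the ``one per staircase'' count only gives $|B| \le n - L$, which can be $\Theta(n)$ when $S$ has few distinct values. That is not a small technicality to be patched; it is the crux. No argument is offered for why the staircases, columns, or the cap $m_i$ on the column count should conspire to keep $|B|$ within budget, and without such an argument the contradiction never materializes. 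As written, the proof does not close.

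The paper takes a different and more direct route that avoids the bad-set bookkeeping entirely. It classifies each round as \emph{heavy} (all elements output in the round share one value) or \emph{light} (at least two distinct values), and restricts attention to \emph{full} rounds (exactly $d_i$ sequences touched), of which there can be at most one exception. Heavy rounds are bounded by building a matrix whose rows are indexed by the distinct heavy-round values: two heavy rounds of the same value touch disjoint sequences in position order, so the gridding is already tie-free and $H \le m_i/4$ follows from \cref{lem:marcus-tardos-rect}. Light rounds are bounded by considering only every other light round: two light rounds separated by another light round must involve disjoint value ranges, so the resulting matrix again yields a genuine $\pi$-occurrence and $L \le m_i/2$. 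Summing $H + L$ plus the one possibly non-full round gives at most $m_i$ rounds, the contradiction. If you want to salvage your own approach, I'd suggest looking for the analogue of this heavy/light split inside your staircases rather than trying to bound $|B|$ uniformly — equal-valued elements that fall across a light-round boundary behave very differently from those buried inside a run of heavy rounds.
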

\begin{proof}
Assume for a contradiction that more phases than $\lceil \log c_\pi\rceil + 2$ are needed and set $i = \lceil \log c_\pi\rceil + 2$.
Our goal is to show that, in fact, the algorithm should have terminated in the $i$th phase.
Our choice of $i$ guarantees that $2^i \ge  4 c_\pi$ and therefore we have $d_i \ge 4 c_\pi$.

We distinguish between two different types of rounds during the execution of \cref{alg:merge} on general sequences.
A round is \emph{heavy} if all elements processed during it are equal in value; otherwise we say that the round is \emph{light}.
A \emph{value} of a heavy round is simply the singular value of all the elements processed in that round.
We shall bound the number of heavy and light rounds independently.
Furthermore, we focus only on rounds that touch exactly $d$ sequences.
Let us call such rounds \emph{full}.
Observe that there can be at most one non-full round, the final one, and we can handle this round separately regardless of its type.

First, let us focus on the full heavy rounds.
Observe that if both the $i$th and $j$th rounds are heavy and of the same value, then they necessarily touch a completely disjoint set of sequences.
Moreover, if $i < j$ then all the sequences touched in the $i$th round precede those touched in the $j$th round in the left-to-right order given by their indices.
Let $H$ denote the number of full heavy rounds and let $x_1 < x_2 < \dots < x_p$ be all the distinct values of some full heavy round where $p \le H$.
We claim that $H \le \frac{m_i}{4}$.
Suppose not.
We define an $m_i\times p$ matrix $M$ as follows
\[M(j,k) = \begin{cases}
	1 &\text{if the sequence $S'_j$ was touched by a full heavy round of value $x_k$}\\
	0 &\text{otherwise.}
\end{cases}\]
As we already argued, no sequence was touched by two different heavy rounds of the same value.
Moreover, we only count full rounds that touched exactly $d$ sequences and therefore, the number of 1-entries in $M$ is 
\[H \cdot d_i > \frac{m_i}{4} \cdot 4c_\pi = m_i \cdot c_\pi.\]

It follows that $M$ again contains an occurrence of $\pi$ by \cref{lem:marcus-tardos-rect} and since all its elements come from pairwise distinct sequences, we can find an occurrence of $\pi$ in the original input sequence $S$.
Therefore, we reached a contradiction and conclude that $H \le \frac{m_i}{4}$.

We would like to bound the number of light rounds in a similar fashion to the proof of \cref{claim:rounds-total}.
However, two elements processed in consecutive light rounds might have the same value, and thus, the gridding induced by sequences and rounds would be ill-formed.
In particular, an occurrence of~$\pi$ in this gridding would not necessarily imply an occurrence of~$\pi$ in the input sequence.
On the other hand, notice that two light rounds separated by another light round must involve elements of completely disjoint values.
Therefore, the solution is to consider only every other light round.

Let $L$ denote the number of full light rounds.
We claim that $L \le \frac{m_i}{2}$.
Assume for a contradiction that $L > \frac{m_i}{2}$ and let $\ell_1, \dots, \ell_{\lceil L/2\rceil}$ be the indices of odd full light rounds.
We construct an $m_i \times \lceil \frac{L}{2}\rceil$ matrix $M$ as follows
\[M(j,k) = \begin{cases}
	1 &\text{if the sequence $S'_j$ was touched in the ($\ell_k$)th round}\\
	0 &\text{otherwise.}
\end{cases}\]

Each full round touches exactly $d_i$ sequences, and thus the number of 1-entries in $M$ is
\[\left\lceil \tfrac{L}{2} \right\rceil \cdot d_i > \frac{m_i}{4} \cdot 4 c_\pi = m_i \cdot c_\pi.\]
Again, it follows that $M$ contains an occurrence of $\pi$ due to \cref{lem:marcus-tardos-rect}.
We argue that it induces an occurrence of $\pi$ in the original input sequence $S$.
We recover a subsequence $x_1, \dots, x_{|\pi|}$ of $S$ from the occurrence of $\pi$ in $M$ exactly as in the proof of \cref{claim:rounds-total}.
In this general case, we must also verify that all the elements are distinct in value.
But crucially, this follows by our consideration of only every other light round and hence, $x_1, \dots, x_{|\pi|}$ is order-isomorphic to $\pi$ and we reach a contradiction.
Thus, we have $L \le \frac{m_i}{2}$.

Putting it all together, the $i$th phase consisted of at most $\frac{3}{4} m_i$ full rounds, and accounting for the one additional final round makes the total number of rounds at most $\frac{3}{4}m_i + 1$.
Furthermore, the number of rounds must be an integer and thus we know that at most $\lfloor \frac{3}{4}m_i \rfloor + 1$ rounds occurred. 
But it is easily verified that $\lfloor \frac{3}{4}m_i \rfloor + 1 \le m_i$ for every integer $m_i \ge 1$ and it follows that \cref{alg:merge} should have terminated in the $i$th phase.
\end{proof}

\section{Sorting many short sequences}
\label{sec:sorting-short}

In this section, we show how to use decision trees to efficiently sort a large set of short sequences.
Step by step, we will construct an algorithm summarized by the following theorem.

\begin{theorem}\label{thm:sort-many}
	Let $S_1, \ldots, S_{n/k}$ be $\pi$-avoiding sequences, each of length $k$. Then there is an algorithm that sorts all $S_i$'s in time $O((\log c_\pi+1) \cdot n) + 2^{2^{O(k \log k)}}$ even if $\pi$ is a priori unknown.
\end{theorem}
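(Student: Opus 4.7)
My plan is to reduce the task to Fredman's decision-tree existence result, and to compensate for our ignorance of $\pi$ by a brute-force enumeration over decision trees whose cost depends only on $k$. Concretely, a comparison-based decision tree for sorting sequences of length $k$ has at most $2^d$ leaves at depth $d$, each internal node labeled by an index pair in $[k]^2$ and each leaf by a permutation of $[k]$; hence the number of such trees of depth at most $D = O(k \log k)$ is bounded by $2^{2^{O(k \log k)}}$. Given any $\Gamma \subseteq S_k$, Fredman's theorem guarantees a correctly sorting tree for $\Gamma$ of depth $O(\log|\Gamma| + k)$, and we can locate one by enumerating all depth-$\le D$ trees and testing each against every $\sigma \in \Gamma$ --- a procedure whose total running time is still $2^{2^{O(k \log k)}}$.

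The online phase maintains a set $\Gamma$ of order types already encountered together with a decision tree $T$ of depth $O(\log |\Gamma| + k)$ that correctly sorts every $\sigma \in \Gamma$ (initially $\Gamma = \emptyset$). For each sequence $S_i$ in turn: first run $T$ on $S_i$ to obtain a candidate permutation, apply it to $S_i$, and verify in $k-1$ comparisons that the result is non-decreasing. If so, output the sorted $S_i$; otherwise fall back to a standard $O(k \log k)$-time sort of $S_i$, adjoin the newly seen order type $\sigma_i$ to $\Gamma$, and recompute $T$ via the brute-force search.

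For the analysis, observe that $T$ can fail on $S_i$ only when $\sigma_i \notin \Gamma$ at that moment, so the fallback direct sort is executed at most $|\Gamma^*|$ times, where $\Gamma^*$ is the total set of order types appearing in the input. Since every $S_i$ is $\pi$-avoiding, $\Gamma^* \subseteq \Av_k(\pi)$, which gives $|\Gamma^*| \le \min(k!, s_\pi^k) \le 2^{O(k(\log c_\pi + 1))}$. Fallback costs contribute at most $|\Gamma^*| \cdot O(k \log k) = 2^{O(k \log k)}$, and the at most $|\Gamma^*| \le k!$ recomputations of $T$ contribute $2^{2^{O(k \log k)}}$ in total. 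Throughout the execution, $\Gamma \subseteq \Gamma^*$, so Fredman's bound yields $\mathrm{depth}(T) = O(\log|\Gamma^*| + k) = O(k(\log c_\pi + 1))$, hence querying $T$ on each of the $n/k$ sequences costs $O((\log c_\pi + 1) \cdot n)$ in total, with verification adding a further $O(n)$. Summing yields the claimed $O((\log c_\pi + 1) \cdot n) + 2^{2^{O(k \log k)}}$.

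The extension from permutation sequences to arbitrary $\pi$-avoiding sequences with ties is handled by breaking ties according to position, which turns each $S_i$ into a genuine permutation sequence without affecting the number of distinct ``enriched'' order types beyond a function of $k$. The one subtlety I expect to handle carefully is the final bookkeeping in the brute-force tree search: one must argue that enumerating depth-$D$ trees and testing each candidate against every $\sigma \in \Gamma$ can be made to fit inside $2^{2^{O(k \log k)}}$ without an extra multiplicative blow-up per recomputation, which amounts to choosing $D = \Theta(k \log k)$ (so that the generic $O(k \log k)$ sort is always in the search space) and amortizing the enumeration cost across the at most $k!$ updates.
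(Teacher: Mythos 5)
Your approach is correct for permutation sequences, but it is a genuinely different algorithm from the paper's. You maintain an explicit set $\Gamma$ of order types already seen, recompute the decision tree $T$ from scratch by brute-force search each time a new order type appears, and use a fallback $O(k\log k)$ sort whenever $T$ fails. The paper instead fixes one enumeration of all decision trees in order of increasing height, keeps a single pointer into it, and on a failed run simply advances the pointer and retries the same $S_i$---no set $\Gamma$ is maintained, no recomputation from scratch occurs, and no fallback sort is needed. The correctness argument is the same Fredman bound in both cases: the enumeration (resp.\ your recomputation) can never pass the depth $\log|\Gamma| + 2k$, because by Fredman some tree of that depth already sorts every order type in the input. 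Your version is a bit more bookkeeping but works, and the amortization of the $k!$ recomputations into $2^{2^{O(k\log k)}}$ is fine. The paper's version is arguably cleaner because the enumeration pointer moves monotonically, so the total number of unsuccessful runs is bounded by the number of trees and no amortization across recomputations is needed.

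There is one genuine gap, and it is exactly the place the paper spends extra effort. Your bound $|\Gamma^*|\le \min(k!,\; s_\pi^k)$ is only valid when each $S_i$ is a permutation sequence, because $\Gamma^*\subseteq \Av_k(\pi)$ fails once ties are present: after you break ties by position, the resulting permutations are \emph{not} $\pi$-avoiding in general. What is actually needed is a bound of the form $c_\pi^{O(k)}$ on the number of non-order-isomorphic $\pi$-avoiding sequences of length $k$ (equivalently, the number of $\pi$-avoiding $k\times k$ binary matrices with $k$ 1-entries up to the natural encoding). This is not at all immediate---without it, your argument only gives $\mathrm{depth}(T)=O(\log k!+k)=O(k\log k)$, which would yield an $O(n\log\log\log n)$ bound in the eventual application, not linear. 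The paper proves this counting bound as a separate theorem (bounding $|T_\pi(n,n)|$ by $(240c_\pi^2)^n$, adapting Klazar's reduction). Your phrase ``without affecting the number of distinct enriched order types beyond a function of $k$'' elides this: being merely a function of $k$ is not sufficient; it must be $c_\pi^{O(k)}$, and that requires an argument.
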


\subsection{Decision trees}

A \emph{decision tree} $T$ for sorting a sequence of length $k$ is formally a full binary tree with each internal node labeled by a pair $(i,j) \in [k]^2$ and each leaf labeled by a permutation of length $k$.
For a sequence $S= (s_1, \dots, s_k)$ of length $k$, the \emph{run of $T$ on $S$} is the root-to-leaf path in $T$ such that in an internal node $v$ labeled with $(i,j)$, the path continues to the left subtree of $v$ if and only if $s_i \le s_j$.
For a sequence $S$ of length $k$, the \emph{result of a run of $T$ on $S$} is the label of the leaf reached by the run of $T$ on $S$.
We say that a decision tree $T$ \emph{sorts} a sequence $S = s_1, \dots, s_k$ if the result of the run of $T$ on $S$ is a permutation $\sigma$ such that $s_{\sigma_1} \le \dots \le s_{\sigma_k}$.
See \Cref{fig:decision-tree}.
Observe that for a permutation $\sigma$, a decision tree $T$ sorts $\sigma$ precisely if the result of applying $T$ on $\sigma$ is its inverse $\sigma^{-1}$.

\begin{figure}
		\centering
\forestset{
	default preamble={
		for tree={draw,ellipse,s sep=5mm, l sep=0mm}
	}
}
		\begin{forest}
			[{$(1,2)$}	
				[{$(2,3)$} 
					[{$(2,3)$}[{$1, 2, 3$},rectangle] [{$1, 2, 3$},rectangle] ] 
					[{$(1,3)$} [{$1, 3, 2$},rectangle] [{$3, 1, 2$},rectangle] ]]
				[{$(1,3)$},edge={dashed,red,line width=1.5pt}
					[{$(1,3)$} [{$2, 1, 3$},rectangle] [{$2, 1, 3$},rectangle]] 
					[{$(2,3)$},edge={dashed,red,line width=1.5pt} [{$2, 3, 1$},rectangle,edge={dashed,red,line width=1.5pt}] [{$3, 2, 1$},rectangle] ]]
			]
		\end{forest}
\caption{A decision tree $T$ for sorting a sequence of length~$3$. A run on input sequence $(s_1, s_2, s_3) = (7,2,3)$ is highlighted and it reaches a leaf labeled by the permutation $2, 3, 1$. Indeed, we have $s_2 \le s_3 \le s_1$ and $T$ correctly sorts $S$.
Notice that the tree contains many unnecessary branches since we require the decision trees to be full binary trees.}\label{fig:decision-tree}
\end{figure}

As our main tool, we make use of the fact that for an arbitrary set of permutations there exists a decision tree whose height matches the information-theoretic lower bound up to an extra linear term.

\begin{theorem}[Fredman~\cite{Fredman}]\label{thm:fredman}
For every set ${\mit\Gamma}$ of permutations of length $n$, there exists a decision tree of height at most $\log |{\mit\Gamma}| + 2n$ that correctly sorts every $\pi \in {\mit\Gamma}$.
\end{theorem}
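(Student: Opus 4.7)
The plan is to build the decision tree greedily from the top down. At each internal node $v$, maintain $\Gamma_v \subseteq \Gamma$, the subset of permutations consistent with the comparison outcomes on the root-to-$v$ path, with $\Gamma_{\mathrm{root}} = \Gamma$. At $v$, pick some pair $(i,j) \in [n]^2$ to branch on and partition $\Gamma_v$ into $\Gamma_v^< = \{\sigma \in \Gamma_v : \sigma_i < \sigma_j\}$ and $\Gamma_v^> = \Gamma_v \setminus \Gamma_v^<$; stop when $|\Gamma_v| = 1$ and label the leaf by the unique remaining permutation. The only choice is which $(i,j)$ to use at each node.

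I would bound the depth via an amortized potential
\[
  \Phi(v) \;=\; \log_2 |\Gamma_v| \;+\; 2\,\psi(v),
\]
where $\psi(v) \in \{0,1,\dots,n\}$ measures how many positions in the sorted order are still ``free'' given $\Gamma_v$. Since $\Phi(\mathrm{root}) \le \log|\Gamma| + 2n$ and $\Phi(\mathrm{leaf}) = 0$, it suffices to exhibit at every internal node a comparison that decreases $\Phi$ by at least $1$ on both branches. The dichotomy is: if some comparison splits $\Gamma_v$ reasonably evenly (say, both sides of size $\ge |\Gamma_v|/2$), picking it drops $\log|\Gamma_v|$ by at least $1$ and pays for the step from the first summand alone; otherwise every undetermined comparison is overwhelmingly one-sided in $\Gamma_v$, and the plan is to argue that this skewedness forces the existence of a coordinate $k$ whose position in the sorted order is almost constant across $\sigma \in \Gamma_v$, so that a single comparison can pin $k$ down --- dropping $\psi$ by at least $1$ on both branches while barely shrinking $|\Gamma_v|$ on the common branch, which is charged entirely through the $2\psi$ term.

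The main obstacle is the dichotomy lemma itself: relating the statistics of comparison outcomes within $\Gamma_v$ (which governs whether balanced splits exist) to the combinatorial structure of $\Gamma_v \subseteq S_n$ (which governs whether $\psi$ can be reduced). My expected route is a pigeonhole/entropy argument: if every coordinate-pair comparison has highly skewed marginal in $\Gamma_v$, then $\Gamma_v$ is concentrated on the linear extensions of a partial order on $[n]$ with few incomparable pairs, which must expose a nearly-fixed coordinate usable to decrease $\psi$. Once the dichotomy is established, a straightforward induction on $\Phi$ along each root-to-leaf path delivers the claimed depth bound of $\log|\Gamma| + 2n$.
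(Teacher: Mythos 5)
The paper does not prove this statement; it cites it directly from Fredman's 1976 paper. So I will assess your proposal against the known argument from the literature.

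Your proposal is a sketch rather than a proof, and you yourself flag the crucial step --- the ``dichotomy lemma'' --- as the main obstacle and leave it unproved. Unfortunately that step is exactly where the theorem lives, and the version you sketch has concrete problems. First, the ``balanced'' branch of your dichotomy needs $\log|\Gamma_v|$ to drop by a full unit on both sides, which forces a perfectly even $1:1$ split; if you relax to the natural $1/3$--$2/3$ threshold (``reasonably evenly''), the drop on the majority side is only $\log(3/2) \approx 0.585$, and chaining this gives a depth bound of roughly $\tfrac{1}{\log(3/2)}\log|\Gamma| + 2n \approx 1.71\log|\Gamma| + 2n$, which does not match the claimed $\log|\Gamma| + 2n$. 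Second, even the existence of a $1/3$--$2/3$ comparison is not at all obvious: the analogous statement for linear extensions of posets is the famous and still-open $1/3$--$2/3$ conjecture, and your $\Gamma_v$ need not even be the set of all linear extensions of a partial order (it is an arbitrary subset of $S_n$ cut by comparison constraints). Third, the ``skewed'' branch --- that uniformly skewed comparison statistics force a coordinate whose sorted position is nearly fixed, letting you pin it down and decrease $\psi$ on both branches --- is stated but not argued, the definition of $\psi$ is left informal, and I do not see why it should be true in the form you need. In short, the framework is plausible-looking but the constants do not work out and the central lemma is missing.

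Fredman's actual proof takes a quite different route that sidesteps all of this. It is an insertion-sort argument: insert $x_1,\dots,x_n$ one at a time into a growing sorted list, and at step $k$ locate $x_k$ among its possible insertion positions not by uniform binary search but by a \emph{weighted alphabetic binary search tree}, where each candidate position is weighted by the number of permutations of $\Gamma$ consistent with it. The Gilbert--Moore/Mehlhorn bound guarantees an alphabetic tree in which the leaf of weight $w$ (out of total weight $W$) sits at depth at most $\log(W/w)+2$. Summing over the $n$ insertion steps, the $\log(W/w)$ terms telescope to $\log|\Gamma|$ (from total weight $1$ down to a single permutation of weight $1/|\Gamma|$) and the ``$+2$'' terms contribute exactly $2n$. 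This avoids any dichotomy between balanced and skewed comparisons; the $+2n$ has a clean source (one $+2$ per insertion), and there is no need for a combinatorial claim about skewed marginals forcing fixed coordinates. If you want to salvage a greedy top-down potential proof you would need to first pin down and prove the dichotomy lemma with constants that actually telescope to $\log|\Gamma| + 2n$, and as it stands that gap is essential.
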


It follows by the Marcus-Tardos theorem~\cite{MarcusTardos} and the results of Cibulka~\cite{Cibulka2009} that the number of $\pi$-avoiding permutations of length~$n$ is bounded by $c_\pi^{O(n)}$.
Therefore, there is a decision tree of height $O((\log c_\pi+1) \cdot  n)$ that correctly sorts every $\pi$-avoiding permutation of length $n$.

%

We construct an algorithm that receives several sequences of length $k$ on input and sorts each of them asymptotically in the same time as given by \cref{thm:fredman}, without explicit knowledge of ${\mit\Gamma}$.
The only caveat is that it needs an additional overhead that is roughly doubly exponential in $k$.

\begin{lemma}\label{lem:many-sort}
	Let $S_1, \ldots, S_{m}$ be a set of permutation sequences, each of length $k$, and let ${\mit\Gamma}$ be the set of all permutations order-isomorphic to some sequence $S_i$. \cref{alg:sorting-many} sorts all $S_i$'s in time $O((\log |{\mit\Gamma}| + k) \cdot m ) + 2^{2^{O(k \log k)}}$.
\end{lemma}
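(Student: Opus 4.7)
The plan is to adaptively discover the set ${\mit\Gamma}$ of order types on the fly and maintain a Fredman-optimal decision tree for the observed subset. Explicitly: keep a growing set ${\mit\Gamma}' \subseteq {\mit\Gamma}$ of order types already seen, initialized to $\emptyset$, together with a decision tree $T_{{\mit\Gamma}'}$ of depth at most $\log|{\mit\Gamma}'| + 2k$ that sorts every $\pi \in {\mit\Gamma}'$. Such a tree exists by Fredman's theorem (\cref{thm:fredman}), and can be produced from ${\mit\Gamma}'$ by brute-force enumeration: there are at most $2^{2^{O(k \log k)}}$ labelled full binary trees of depth up to $k \log k + 2k$ on $k$ elements, and checking whether a given one sorts all of ${\mit\Gamma}'$ takes $|{\mit\Gamma}'| \cdot O(k \log k) \le 2^{O(k \log k)}$ time, so each recomputation costs $2^{2^{O(k \log k)}}$.

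Main loop: for each input sequence $S_i = (s_1,\dots,s_k)$, run $T_{{\mit\Gamma}'}$ on $S_i$ to obtain a candidate sorting permutation $\sigma$, then verify in $O(k)$ time whether $s_{\sigma_1} \le \dots \le s_{\sigma_k}$. If the verification passes, output $\sigma$ and proceed. Otherwise, sort $S_i$ from scratch in $O(k \log k)$ time to obtain the correct $\sigma$, update ${\mit\Gamma}' \gets {\mit\Gamma}' \cup \{\sigma^{-1}\}$ and recompute $T_{{\mit\Gamma}'}$, then output $\sigma$. Correctness is immediate: Fredman's theorem ensures that $T_{{\mit\Gamma}'}$ sorts every $\pi \in {\mit\Gamma}'$, so whenever $S_i$'s order type lies in the current ${\mit\Gamma}'$ the verification must pass, and in the remaining cases the fallback sort produces the answer directly.

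For the running time, the per-sequence cost of running $T_{{\mit\Gamma}'}$ and verifying is $O(\log|{\mit\Gamma}'| + k) \subseteq O(\log|{\mit\Gamma}| + k)$, contributing $O((\log|{\mit\Gamma}| + k) \cdot m)$ across all $m$ sequences. The crucial observation is that the fallback branch can fire at most $|{\mit\Gamma}|$ times overall: on a failure, the newly inserted order type $\sigma^{-1}$ cannot have been in ${\mit\Gamma}'$ already---otherwise Fredman's tree would have sorted $S_i$ on the first try---so each failure grows ${\mit\Gamma}'$ by one. Consequently the total cost of fallback sorts together with tree recomputations is at most $|{\mit\Gamma}| \cdot \bigl(O(k \log k) + 2^{2^{O(k \log k)}}\bigr) \le k! \cdot 2^{2^{O(k \log k)}} = 2^{2^{O(k \log k)}}$, matching the overhead in the statement. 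The only delicate point in the whole argument is precisely this monotone-discovery bound on failures; everything else---the brute-force construction of Fredman-optimal trees, the $O(k)$ verification, and the $O(k \log k)$ fallback sort---is routine and comfortably fits the allowed budget.
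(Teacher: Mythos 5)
Your proof is correct as a proof that \emph{some} algorithm achieves the stated time bound, but it analyzes a different algorithm than \cref{alg:sorting-many}, which the lemma explicitly names. The paper's algorithm is deliberately oblivious: it enumerates \emph{all} decision trees in nondecreasing order of height and simply advances to the next tree upon any unsuccessful run, never constructing anything or tracking which order types have been seen. Its analysis rests on the observation that once the enumeration reaches Fredman's guaranteed tree of height $\log|{\mit\Gamma}| + 2k$, no further failures can occur; the number of unsuccessful runs is therefore bounded by the total number of decision trees of height $O(k \log k)$, which \cref{lem:decision-trees-count} bounds by $2^{2^{O(k \log k)}}$. By contrast, you have the algorithm actively maintain the set ${\mit\Gamma}'$ of observed order types and, on each failure, brute-force a fresh Fredman-optimal tree for the updated ${\mit\Gamma}'$; your key bound is that failures are monotone discoveries, so they fire at most $|{\mit\Gamma}| \le k!$ times. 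Both arguments give the same asymptotic overhead, since $k! \cdot 2^{2^{O(k \log k)}} = 2^{2^{O(k \log k)}}$, and both are fully pattern-agnostic. What the paper's route buys is a simpler algorithm (and a proof of the lemma as literally stated); what your route buys is a cleaner conceptual accounting — each failure is charged to a newly discovered order type rather than to the ambient count of all shallow decision trees — and it generalizes a bit more transparently to settings where ${\mit\Gamma}$ is tiny relative to the universe of short permutations. If you want your argument to establish the lemma as written, you would need to instead show that \cref{alg:sorting-many}'s enumeration strategy admits the same bound, which is exactly where \cref{thm:fredman} is invoked existentially rather than constructively.
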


We first provide an upper bound on the total number of decision trees of given height for sorting a sequence of length $k$.
Note that we are neglecting here any optimization of the lower-order terms.

\begin{lemma}\label{lem:decision-trees-count}
The number of decision trees of height at most $h$ for sorting a sequence of length $k$ is at most $2^{2^{h + o(k) +o(h)}}$.
\end{lemma}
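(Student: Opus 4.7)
The plan is to decouple the count into (i) the number of underlying tree shapes and (ii) the number of labelings of internal nodes and leaves, bounding each separately.

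First, I would observe that a full binary tree of height at most $h$ has at most $2^h$ leaves and at most $2^h - 1$ internal nodes. Letting $T(h)$ denote the number of unlabeled full binary trees of height at most $h$, the recursion $T(0) = 1$ and $T(h) \le 1 + T(h-1)^2$ (a root is either a leaf, or has two children each heading a subtree of height at most $h-1$) gives $T(h) \le 2^{O(2^h)}$. Equivalently, one can invoke the Catalan-type bound that the number of full binary trees with $N$ nodes is at most $4^N$, applied with $N \le 2^{h+1}$.

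Next, for each fixed shape I would multiply by the number of ways to label its internal nodes with pairs from $[k]^2$ and its leaves with permutations of $[k]$, namely at most $(k^2)^{2^h - 1}$ and $(k!)^{2^h}$ respectively. Combining the two contributions,
\[
\#\text{trees} \;\le\; 2^{O(2^h)} \cdot k^{2(2^h-1)} \cdot (k!)^{2^h} \;\le\; 2^{2^h \cdot O(k \log k)},
\]
using $\log_2(k!) = O(k \log k)$. Taking $\log_2$ of the inner exponent once more yields
\[
\log_2\!\bigl(2^h \cdot O(k \log k)\bigr) \;=\; h + O(\log k),
\]
and since $O(\log k) = o(k)$ (and is absorbed into $o(h)$ whenever $h \to \infty$ as well), the asserted bound $2^{2^{h + o(k) + o(h)}}$ follows.

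I do not anticipate any real obstacle: the argument is an exercise in carefully tracking factors involving $2^h$ and then taking logarithms twice. The only delicate point is to verify that the additive $O(\log k)$ slack in the innermost exponent is genuinely absorbed into $o(k) + o(h)$ under the standard asymptotic convention where at least one of the parameters $h, k$ tends to infinity — which is clearly the case in our intended applications where $k \to \infty$.
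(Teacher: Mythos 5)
Your proof is correct and reaches the same $2^{2^{h+o(k)+o(h)}}$ bound (in fact the sharper $2^{2^{h+O(\log k)}}$), but it handles the tree-shape count differently than the paper. The paper's convention for ``full binary tree'' is what the figure and the node counts $2^h-1$ internal nodes, $2^h$ leaves make clear: a \emph{perfect} binary tree, so a decision tree of height exactly $h'$ has a single shape, and ``height at most $h$'' is handled by summing the label counts over $h'\le h$, picking up only an extra factor of $h$ (hence the $o(h)$). You instead read ``full binary tree'' in the weaker sense (every node has $0$ or $2$ children, leaves at varying depths), so you also need a Catalan-type bound of $2^{O(2^h)}$ on the number of shapes before multiplying by the per-shape label count. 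Since you are counting a superset of the objects the paper counts, your estimate is still a valid upper bound, and the extra $2^{O(2^h)}$ shape factor is harmless because it contributes only $O(1)$ to the innermost exponent after taking $\log_2$ twice. Both routes thus work; yours is slightly more general and does not depend on the perfect-tree convention, while the paper's is slightly cleaner because it can dispense with the shape enumeration entirely.
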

\begin{proof}
Let us first count the number of such decision trees of height exactly $h$.
Since the shape of the tree is given, the decision tree is completely determined by its labels.
There are $2^h-1 \le 2^h$ internal nodes, each labeled by one of $k^2$ possible ordered pairs.
Additionally, there are exactly $2^h$ leaves, each marked with a permutation of length $k$.
Together, that makes the number of such decision trees of height $h$ at most
\[(k^2)^{2^h} \cdot (k!)^{2^h} \le 2^{2 \cdot \log k \cdot 2^h} \cdot 2 ^{k \cdot \log k \cdot 2^h} \le 2^{2^{h + o(k)}}.\]

We obtain the promised bound by summing over all heights up to $h$ 
\[h \cdot 2^{2^{h + o(k)}} \le 2^{2^{h + o(k) + o(h)}}.\qedhere\]
\end{proof}

Let us take a brief detour and sketch how a prior knowledge of $\pi$ can be used to efficiently sort many $\pi$-avoiding sequences of length $k$.
Knowing $\pi$,  we first compute the set of all $\pi$-avoiding permutations of length $k$ by enumerating all permutations of length $k$ and testing for pattern-avoidance.
Afterwards, we enumerate all the decision trees of height $O((\log c_\pi+1) \cdot k)$ and for each of them, we test whether it successfully sorts all the precomputed $\pi$-avoiding permutations of length~$k$.
All of this can be done in $2^{2^{O((\log c_\pi+1) \cdot k)}}$ time.
Once we find a suitable decision tree guaranteed by \cref{thm:fredman}, we simply apply it on every input sequence in $O((\log c_\pi+1) \cdot k)$ time for a total of $O((\log c_\pi +1 ) \cdot n)$.
Note that this approach follows closely the use of decision trees in the celebrated asymptotically optimal algorithm for minimum spanning tree by Pettie and Ramachandran~\cite{PettieR02}.

\medskip
However, we propose a more elegant solution that does not require prior knowledge of $\pi$ and additionally works for sets of permutations defined through different means than pattern avoidance.
Rather than computing the optimal decision tree upfront, we work with a candidate tree instead.
For each sequence $S_i$ on input, we try applying the current decision tree $T$ on $S_i$ and test in $O(k)$ time whether $T$ sorts $S_i$.
If that is the case, we move on to the next sequence with the same decision tree.
Otherwise, we throw away the current decision tree, replace it with a different one, and try sorting $S_i$ again.

\paragraph{Description of the algorithm.}
Now, we describe the algorithm in full detail.
Recall that we are given $m$ sequences $S_1, \ldots, S_{m}$ of length $k$ on input.
For now, assume that each $S_i$ is a permutation sequence.

Let $T_1, T_2, \ldots$ be an arbitrary enumeration of all the decision trees for sorting a sequence of length $k$ in the increasing order given by their heights, starting with the decision trees of height exactly $k$.
For now, we just assume that we can efficiently enumerate them in this order, leaving the implementation details for later.
The algorithm keeps a current candidate for the optimal decision tree $T$.
At the beginning, we set $T \gets T_1$.

Now, for each $S_i$, we do the following.
The algorithm computes the result $\sigma$ of the run of the tree~$T$ on~$S_i$.
If $T$ sorts $S_i$, i.e. we have $s_{\sigma_1} \le  \dots \le s_{\sigma_k}$, the algorithm accordingly rearranges $S_i$ and moves on to the next sequence on input.
Otherwise, the algorithm changes $T$ to the next decision tree in the enumeration sequence and repeats the previous step.
See \cref{alg:sorting-many}.

\begin{algorithm}[t] 	\caption{Algorithm for sorting $\pi$-avoiding sequences $S_1, \dots, S_m$ of length $k$.}\label{alg:sorting-many}
	\begin{algorithmic}
		\State $T \gets$ first decision tree in the enumeration sequence 
		\For{$i \in \{1, \dots, m\}$}
		\State $\sigma \gets$ the result of applying $T$ on $S_i = s^i_1, \dots, s^i_k$ 
		\While{$s^i_{\sigma_1}, \dots, s^i_{\sigma_k}$ is not sorted}
		\State $T \gets$ next decision tree in the enumeration sequence
		\State $\sigma \gets$ the result of the run of $T$ on $S_i$ 		
		\EndWhile
		\State $S_i \gets s^i_{\sigma_1}, \dots, s^i_{\sigma_k}$ \Comment{Sorting $S_i$.}
		\EndFor
	\end{algorithmic}
\end{algorithm}

\paragraph{}
If the algorithm terminates, it clearly sorts all the input sequences, since it moves to the next sequence only after it verifies that the current sequence is sorted.
Moreover, all sequences of length $k$ can be sorted by a decision tree of height $O(k \log k)$, and thus the algorithm must eventually reach a decision tree capable of sorting all input.
It follows that the algorithm always terminates and correctly sorts all the sequences on input.
It remains to bound its running time.

\paragraph{Implementation details.}
Let us show how to efficiently implement the enumeration of decision trees.
A decision tree of height $h$ is determined by $2^h-1$ ordered pairs from $[k]^2$ and $2^h$ permutations of length $k$.
At the beginning of the algorithm, we enumerate all the permutations of length $k$ and store them in a separate array $P$.
A decision tree can then be represented by an array of $2^h-1$ values from the set $[k^2]$, interpreted as ordered pairs, and $2^h$ values from the set $[k!]$, interpreted as indices into the array $P$.
Moreover, a single traversal of the tree determined by an input sequence can easily be performed in $O(h)$ time over this representation.
It suffices to arrange the internal nodes in the same way as in the standard implementation of a binary heap, i.e., a node at index $i$ has its children located at indices $2i$ and $2i+1$.

All decision trees of height $h$ can now easily be enumerated in lexicographic order with respect to this representation.
Moreover, standard analysis shows that such enumeration can be implemented with $O(1)$ amortized time per one step.
It remains to handle the situation when all decision trees of height $h$ have already been enumerated and we should start enumerating the decision trees of height $h+1$.
However, this happens so infrequently that we can reset and prepare the representation for decision trees of height $h+1$ from scratch without affecting the constant amortized time per one step.

\paragraph{Running time.}
\cref{thm:fredman} guarantees the existence of a decision tree of height $\log |{\mit\Gamma}| + 2k$ that can sort all sequences on input.
Therefore, if \cref{alg:sorting-many} reaches this decision tree in its enumeration, it does not iterate any further and all subsequent sequences are sorted on the first attempt.
In particular, the height of the decision tree $T$ in \cref{alg:sorting-many} is at most $\log |{\mit\Gamma}| + 2k$ throughout its whole runtime.

We say that a run of $T$ on $S_i$ in \cref{alg:sorting-many} is \emph{successful} if $T$ sorts $S_i$ and otherwise, it is \emph{unsuccessful}.
Observe that each input sequence $S_i$ incurs exactly one successful run and arbitrarily many unsuccessful ones.
Therefore, the number of successful runs is exactly $m$, making the total time taken by them exactly $O((\log |{\mit\Gamma}| + k) \cdot m)$.
On the other hand, each unsuccessful run causes \cref{alg:sorting-many} to advance to the next decision tree and thus, the number of unsuccessful runs is bounded by the number of decision trees considered.
By \cref{lem:decision-trees-count}, this is at most $2^{2^{O(k \log k)}}$ when we crudely bound $h$ to be at most $O(k \log k)$.
Therefore, the total time spent on unsuccessful runs is at most $2^{2^{O(k \log k)}} \cdot h \subseteq 2^{2^{O(k \log k)}}$.
Putting these together, we see that \cref{alg:sorting-many} terminates in  $O((\log |{\mit\Gamma}| + k)\cdot m ) + 2^{2^{O(k \log k)}}$ time, proving \cref{lem:many-sort}.
Consequently, if all the input sequences are $\pi$ avoiding, then $|{\mit\Gamma}| \leq c_\pi^{O(k)}$ and \cref{alg:sorting-many} terminates in $O((\log c_\pi + 1)\cdot n) + 2^{2^{O(k \log k)}}$ time.
This wraps up the proof of \cref{thm:sort-many} for permutation sequences.

\subsection{General sequences}

As in \cref{sec:merge-general}, we could deal with general sequences by perturbing and invoking the same scheme for a larger avoided pattern $\pi'$.
However as before, we avoid this approach in order to retain the asymptotically optimal dependence on $\pi$.

In fact, we will repeat our previous approach and simply run \cref{alg:sorting-many} while breaking all ties according to their position on input, i.e., performing a stable sort.
Equivalently, we replace each sequence $s_1, \dots, s_k$ with a sequence $s'_1, \dots, s'_k$ where $s'_i = s_i + i\cdot \epsilon$ for a sufficiently small $\epsilon > 0$.
Observe that in this way order-isomorphism is preserved, i.e., two order-isomorphic sequences remain order-isomorphic even after this modification.

The only thing we need to argue, is that the number of non-order-isomorphic $\pi$-avoiding sequences of length $k$ is bounded in similar fashion to the number of different $\pi$-avoiding permutations.
The bound on runtime then follows, since \cref{lem:many-sort} is completely independent of pattern-avoidance and its bound depends only on the number of non-order-isomorphic sequences on input.
Recall that we can uniquely associate a unique $\pi$-avoiding $n \times n$ permutation matrix to each $\pi$-avoiding permutation sequence of length $k$.
Then we can use the upper bound on the number of $\pi$-avoiding permutation matrices by Cibulka~\cite{Cibulka2009} to see that the number of non-order-isomorphic $\pi$-avoiding permutation sequences of length $k$ is at most $(2.88 c_\pi^2)^n$.

Similarly, each general $\pi$-avoiding sequence corresponds to a unique $\pi$-avoiding  $n \times m$ matrix where $m \le n$, each column contains exactly one 1-entry and each row contains at least one 1-entry.
We show that the number of such matrices is bounded in a similar way.
In fact, we prove a stronger claim by bounding the number of all $\pi$-avoiding $n \times n$ matrices with exactly $n$ 1-entries. 
Let $T_\pi(m,n)$ denote the set of all $\pi$-avoiding  $m \times m$ matrices with exactly $n$ 1-entries.

Cibulka~\cite{Cibulka2009} claims that arguments similar to his proof relating together the Stanley-Wilf and Füredi-Hajnal limits can be used to show that $\sqrt[n]{|T_\pi(n,n)|} \in O(c_\pi^2)$.
Unfortunately, he does not provide any further justification.
Therefore, we include a self-contained proof of this fact.

\begin{theorem}\label{thm:furedi-stanley}
For every pattern $\pi$ and every $n \in \mathbb{N}$, we have $|T_\pi(n,n)| \le (240 c_\pi^2)^n$.
\end{theorem}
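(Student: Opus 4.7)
My plan is to adapt the $2\times 2$ block recursion underlying Cibulka's~\cite{Cibulka2009} analogous bound $|\Av_n(\pi)| \le (2.88\,c_\pi^2)^n$ for permutations to the setting of sparse matrices with exactly $n$ ones. I would prove by strong induction on $n$ the stronger two-variable statement
\begin{equation*}
|T_\pi(n,m)| \;\le\; A^n B^m \qquad \text{for every } 0 \le m \le c_\pi n,
\end{equation*}
where $A$ and $B$ are chosen so that $AB \le 240\,c_\pi^2$; setting $m = n$ then gives \cref{thm:furedi-stanley}. The base cases ($n$ below a small constant) follow by directly bounding $|T_\pi(n,m)|$ by $\binom{n^2}{m}$ and choosing $A,B$ large enough to dominate.

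The inductive step has two ingredients. First, given $M \in T_\pi(n,m)$, partition $M$ into $2\times 2$ blocks and form the coarse $(n/2) \times (n/2)$ binary matrix $M'$ that marks non-empty blocks. A lifting argument — pick any representative $1$-entry from each marked block; since distinct blocks occupy disjoint pairs of rows and of columns, row- and column-order are preserved — shows that $M'$ also avoids $\pi$. By \cref{lem:marcus-tardos-rect}, the number $b$ of marked blocks then satisfies $b \le c_\pi \cdot n/2$. Second, given $M'$ with $b$ marked blocks, the number of matrices $M$ refining it is the number of ways to fill each marked block with a non-empty subset of its $4$ cells such that the subset sizes sum to $m$; this is exactly $[x^{m-b}]\,p(x)^b$ with $p(x) = 4 + 6x + 4x^2 + x^3$. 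Combining the two yields the recursion
\begin{equation*}
|T_\pi(n,m)| \;\le\; \sum_{b=\lceil m/4\rceil}^{\min(m,\lfloor c_\pi n/2\rfloor)} |T_\pi(n/2, b)| \cdot [x^{m-b}]\,p(x)^b.
\end{equation*}
Substituting the inductive hypothesis reduces the inductive step to a single inequality between generating-function coefficients, made tractable by the closed-form identity $\sum_{b\ge 0} B^b\,[x^{m-b}] p(x)^b \;=\; [x^m]\bigl((1+B) - B(1+x)^4\bigr)^{-1}$, whose dominant singularity sits at $x^{*} = (1+1/B)^{1/4} - 1$.

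The main obstacle will be keeping the dependence on $c_\pi$ polynomial rather than exponential. A crude estimate $[x^{m-b}] p(x)^b \le p(1)^b = 15^b$ is fatally loose in the critical regime, since it forces $\sum_b B^b \cdot 15^b$ to be of order $(15B)^{c_\pi n/2}$ and hence requires $A$ to be exponential in $c_\pi$. The fix is the sharper saddle-point estimate $[x^{m-b}] p(x)^b \le p(r)^b/r^{m-b}$ with $r$ chosen to match the local density $(m-b)/b$, combined with the Marcus–Tardos truncation $b \le c_\pi n/2$ that suppresses the regime $m > c_\pi n/2$ where the unrestricted generating function would blow up like $(4B)^m$. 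With these refinements in place, suitable choices of $A$ and $B$ with $AB = O(c_\pi^2)$ close the induction uniformly in $m$, and routine optimization of the constants yields the explicit bound $|T_\pi(n,n)| \le (240\,c_\pi^2)^n$.
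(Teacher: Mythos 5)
Your $2\times 2$ block recursion captures the right combinatorial picture, and the generating-function bookkeeping is correct: $p(x)=4+6x+4x^2+x^3$ does count nonempty $2\times 2$ fillings by excess size, $[x^{m-b}]p(x)^b$ is the right refinement count, and $\sum_{b\ge 0}B^b[x^{m-b}]p(x)^b=[x^m]\bigl(1-Bxp(x)\bigr)^{-1}=[x^m]\bigl((1+B)-B(1+x)^4\bigr)^{-1}$ checks out. The gap is that the two-variable hypothesis $|T_\pi(n,m)|\le A^nB^m$ with $AB\le 240\,c_\pi^2$ cannot be closed by this recursion, and ``routine optimization of the constants'' hides a structural obstruction rather than a calculation to be tidied up. In the inductive step, the summand at $b=m$ (available whenever $m\le \lfloor c_\pi n/2\rfloor$, so \emph{inside} your Marcus--Tardos truncation, and the saddle-point refinement is exact there since $[x^0]p(x)^m=4^m$) contributes $B^m\cdot 4^m$; after factoring out $A^{n/2}B^m$ this single term already forces $4^m\le A^{n/2}$. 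Taking $m$ near $c_\pi n/2$, which genuinely arises when you unwind the recursion from $(n,n)$ down $\log c_\pi$ levels, yields $A\ge 4^{c_\pi}$. No choice of $B$ repairs this: the hypothesis must also be \emph{true} at density close to $c_\pi$ (such pairs occur in the recursion and $T_\pi(n,m)$ is nonempty for $m$ arbitrarily close to $c_\pi n$), which forces $A\,B^{c_\pi}\ge 1$, hence $B\ge A^{-1/c_\pi}\ge 1/4$ and $AB\ge 4^{c_\pi-1}$ --- exponential in $c_\pi$, not $O(c_\pi^2)$. Intuitively, a factor of $4$ per $1$-entry per level is the right total to accumulate ($4^{m\log c_\pi}=c_\pi^{2m}$ over the $\log c_\pi$ levels where density is increasing), but the rigid form $A^nB^m$ requires that factor to be paid for by $A^{n/2}$ at \emph{every} level, which is impossible once $m/n$ approaches $c_\pi$.

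The paper avoids maintaining any sparsity-refined bound through the recursion. It first establishes the Klazar-style estimate $|T_\pi(2^i)|\le 15^{c_\pi 2^i}$ on the number of $\pi$-avoiding $2^i\times 2^i$ matrices of \emph{arbitrary} sparsity, recursing with $2\times 2$ blocks exactly as you do but charging all $15$ nonzero fillings without tracking how many ones each contains. It then performs a single jump from a coarse $2^a\times 2^a$ matrix with $2^a\approx n/c_\pi$ to the target, using $2c_\pi\times 2c_\pi$ blocks: a stars-and-bars factor $\binom{2n-1}{n}\le 4^n$ distributes the $n$ ones among the at most $n$ marked blocks, and $(2c_\pi)^{2n}$ pays for positions. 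This decouples the ``shape'' cost (Klazar, sparsity-blind, contributing $15^n$) from the ``position'' cost (one large-block step, contributing $(4c_\pi^2)^n\cdot 4^n$), giving $(240c_\pi^2)^n$. That decoupling is exactly what your single-exponential induction hypothesis cannot express; if you want to push the $2\times 2$ recursion through, you would need a base case at size $\approx n/c_\pi$ that itself supplies the Klazar bound, at which point you have essentially reconstructed the paper's argument.
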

\begin{proof}
We assume that $c_\pi \ge 1$, since otherwise $\pi$ is the singleton permutation and $T_\pi(n,n) = \emptyset$ for every $n \ge 1$.

The first part of the argument follows the reduction of Klazar~\cite{Klazar}.
Let $T_\pi(n)$ denote the set of all $\pi$-avoiding $n \times n$ matrices, regardless of the number of 1-entries. 
We show by induction that for every~$i$, we have $|T_\pi(2^i)| \le 15^{c_\pi \cdot 2^i}$.
This clearly holds for $i = 0$, so let $i \ge 1$.
Observe that every matrix $M \in T_\pi(2^i)$ can be obtained from some matrix $ M' \in T_\pi(2^{i-1})$ by replacing each entry $\alpha$ with a $2\times 2$ block that is all-zero if and only if $\alpha = 0$.
Moreover, by the Marcus-Tardos theorem\cite{MarcusTardos} there are at most $c_\pi \cdot 2^{i-1}$ 1-entries in $M'$ and for each of them, there are exactly 15 possible $2 \times 2$ blocks to choose from.
Thus, we get
\[|T_\pi(2^i)| \le 15^{c_\pi \cdot 2^{i-1}} \cdot |T_\pi(2^{i-1})| \le 15^{c_\pi \cdot 2^{i-1}} \cdot 15^{c_\pi \cdot 2^{i-1}} \le 15^{c_\pi \cdot 2^{i}} \]

Set $a = \lfloor \log (n/ c_\pi) \rfloor$ and observe that $\frac{n}{2 c_\pi} \le 2^a \le \frac{n}{c_\pi}$.
In order to simplify the arguments, we bound $|T_\pi(m, n)|$ where $m = 2^{a+1} \cdot c_\pi$.
The desired bound on $|T_\pi(n,n)|$ follows since $m \ge n$ and $|T_\pi(\cdot,n)|$ is a non-decreasing function.
Similarly to before, every matrix $M \in T_\pi(m,n)$ can be obtained from some $M' \in T_\pi(2^a)$ by replacing each entry $\alpha$ with a $2c_\pi \times 2c_\pi$ block that is all-zero if and only if $\alpha = 0$.

We encode $M$ from $M'$ in two steps.
First, we distribute all $n$ 1-entries of $M$ among the blocks obtained from the 1-entries of $M'$.
This is the standard problem of counting the number of ways to distribute indistinguishable balls into bins.
By the Marcus-Tardos theorem\cite{MarcusTardos}, the matrix $M'$ contains at most $c_\pi \cdot 2^a \le n$ 1-entries.
Therefore, the number of ways to distribute the 1-entries of $M$ into the blocks obtained from the 1-entries of $M'$ is at most $\binom{2n-1}{n} \le 4^n$.

At this point, we know how the $n$ 1-entries of $M$ are distributed among the $2c_\pi \times 2c_\pi$ blocks.
It remains to encode the specific location of each 1-entry within its assigned block.
There are $(2c_\pi)^2$ possibilities for each 1-entry, making a total of $(2c_\pi)^{2n}$.
Putting everything together, we obtain
\[|T_\pi(n,n)| \le |T_\pi(m,n)| \le |T_\pi(2^a)| \cdot 4^n \cdot (2c_\pi)^{2n} \le 15^{c_\pi \cdot \frac{n}{c_\pi}} \cdot 16^n (c_\pi^2)^n \le (240 c_\pi^2)^n. \qedhere\]
\end{proof}

This wraps up the proof of Theorem~\ref{thm:sort-many}.

\section{The algorithm}\label{sec:algo}

We can finally assemble the asymptotically optimal algorithm by combining the merging algorithm obtained in \cref{sec:merge} and the algorithm for sorting many short sequences obtained in \cref{sec:sorting-short}.

\begin{proof}[Proof of \cref{thm:main-result}]
We can assume that $n$ is sufficiently large as otherwise, we can sort the input using any standard comparison-based sorting algorithm without affecting the asymptotic runtime.

Set $k = \lceil \log^{(3)}n \rceil$ where $\log^{(t)}n$ denotes the logarithm base $2$ iterated $t$ times.
Now, we split $S$ into $\lceil n/k \rceil$ sequences $S_1, \dots, S_{\lceil n/k\rceil}$ of $k$ elements, except possibly $S_{\lceil n/k\rceil}$, which can be shorter.
We sort $S_{1}, \dots, S_{\lfloor n/k\rfloor}$ using \cref{alg:sorting-many} and additionally, we sort $S_{\lceil n/k\rceil}$ using any $O(k \log k)$ comparison-based sorting algorithm.
By \cref{thm:sort-many}, this takes $O((\log c_\pi+1) \cdot n) + 2^{2^{O(k \log k)}}$ time.
Plugging in $k$, we see that $2^{2^{O(k \log k)}} \subseteq o(n)$ and thus, this step takes only $O((\log c_\pi+1) \cdot n)$ time.\footnote{It would in fact suffice to set $k = C \cdot\frac{\log^{(2)} n}{\log^{(3)}n}$ for a suitably small constant $C$.}

Now it suffices to perform a bottom-up merge sort using \cref{alg:merge}  and starting with $S_1, \dots, S_{\lceil n/k\rceil}$.
To be precise, we first merge $\frac{\log^{(2)}n}{\log^{(3)}n}$-tuples of consecutive sequences in order to reduce the number of sequences to at most $\frac{n}{\log^{(2)}n}$.
Afterwards, we repeat the same step with $\frac{\log n}{\log^{(2)}n}$-tuples reducing the number of sequences to at most $\frac{n}{\log n}$.
Finally, all remaining sequences can be merged by a single run of \cref{alg:merge}.
These can be seen as three layers of recursive merge sort where each layer takes $O((\log c_\pi+1) \cdot n)$ time by \cref{thm:efficient-merge}.
Therefore, the merging as a whole finishes in $O((\log c_\pi+1) \cdot n)$ time and the same bound applies to the whole algorithm.
\end{proof}

%
%

\small
\bibliographystyle{alphaurl}
\bibliography{main}

\end{document}